\pdfoutput=1 %
\pdfminorversion=5 %
\pdfobjcompresslevel=3 %
\pdfcompresslevel=9

\documentclass[11pt,letter]{article}
\usepackage{microtype}%
\usepackage[in]{fullpage}

\usepackage[utf8]{inputenc}
\usepackage[pdftex]{graphicx}
\usepackage[usenames,dvipsnames,table]{xcolor}

\usepackage{enumitem}

\usepackage{amssymb}
\usepackage{amsmath}
\usepackage{amsthm}%
\usepackage{mathtools}%
\usepackage{thmtools}
\usepackage{thm-restate}

\usepackage{float}
\usepackage{morefloats}
\usepackage{cite} %

\usepackage{multirow}

\usepackage{adjustbox}
\usepackage{tikz}
\usetikzlibrary{positioning}

\usepackage[noend]{algpseudocode} %

\usepackage{hyperref}

\usepackage{url}
\urldef{\maildiku}\path|roden@diku.dk|
\urldef{\mailjacob}\path|jaho@di.ku.dk|

\usepackage[textwidth=0.8in,textsize=scriptsize]{todonotes}

\DeclareMathOperator*{\argmin}{arg\,min}
\DeclareMathOperator*{\argmax}{arg\,max}

\newcommand{\Oo}{\mathcal O} %
\newcommand{\oo}{o} %

\newcommand{\abs}[1]{\left\lvert#1\right\rvert}
\newcommand{\ceil}[1]{\left\lceil#1\right\rceil}
\newcommand{\floor}[1]{\left\lfloor#1\right\rfloor}
\newcommand{\set}[1]{\left\{#1\right\}}
\newcommand{\cond}{\mathrel{}\middle\vert\mathrel{}}
\newcommand{\sizeof}[1]{\left\lvert#1\right\rvert}

\newcommand{\meet}{\operatorname{meet}}
\newcommand{\diag}{\operatorname{diag}}
\newcommand{\dist}{\operatorname{dist}}

\renewcommand{\root}{\operatorname{root}} %
\newcommand{\vertex}{\operatorname{vertex}}

\newcommand{\nil}{\mathbf{nil}}
\newcommand{\true}{\mathbf{true}}
\newcommand{\false}{\mathbf{false}}

\newcommand{\boundary}{\partial}

\newcommand{\Cover}{\operatorname{Cover}}
\newcommand{\Uncover}{\operatorname{Uncover}}
\newcommand{\Expose}{\operatorname{Expose}}
\newcommand{\Merge}{\operatorname{Merge}}
\newcommand{\Split}{\operatorname{Split}}

\newcommand{\CleanToBuffer}{\operatorname{CleanToBuffer}}
\newcommand{\CleanToChild}{\operatorname{CleanToChild}}
\newcommand{\ComputeFromBuffer}{\operatorname{ComputeFromBuffer}}
\newcommand{\ComputeFromChild}{\operatorname{ComputeFromChild}}

\newcommand{\cover}{\operatorname{cover}}
\newcommand{\globalcover}{\operatorname{globalcover}}

\newcommand{\packedcover}{\operatorname{packedcover}}
\newcommand{\packedglobalcover}{\operatorname{packedglobalcover}}

\newcommand{\minpathedge}{\operatorname{minpathedge}}
\newcommand{\minglobaledge}{\operatorname{minglobaledge}}

\newcommand{\pointset}{\operatorname{pointset}}
\newcommand{\pointsize}{\operatorname{pointsize}}
\newcommand{\pointincident}{\operatorname{pointincident}}

\newcommand{\size}{\operatorname{size}}
\newcommand{\partpath}{\operatorname{partpath}}
\newcommand{\partsize}{\operatorname{partsize}}
\newcommand{\diagsize}{\operatorname{diagsize}}
\newcommand{\partsizesum}{\operatorname{partsizesum}}
\newcommand{\diagsizesum}{\operatorname{diagsizesum}}
\newcommand{\incident}{\operatorname{incident}}
\newcommand{\partincident}{\operatorname{partincident}}
\newcommand{\diagincident}{\operatorname{diagincident}}

\newcommand{\tree}{\operatorname{tree}}
\newcommand{\undo}{\operatorname{undo}}

\newcommand{\lmax}{\ell_{\max}}
\newcommand{\CoverLevel}{\operatorname{CoverLevel}}
\newcommand{\MinCoveredEdge}{\operatorname{MinCoveredEdge}}
\newcommand{\FindSize}{\operatorname{FindSize}}

\newcommand{\AddLabel}{\operatorname{AddLabel}}
\newcommand{\RemoveLabel}{\operatorname{RemoveLabel}}
\newcommand{\FindFirstLabel}{\operatorname{FindFirstLabel}}

\newcommand{\Patrascu}{P\v{a}tra\c{s}cu}

\declaretheorem[style=plain,name={Theorem}]{theorem}
\declaretheorem[style=plain,name={Lemma},sibling=theorem]{lemma}
\declaretheorem[style=plain,name={Corollary},sibling=theorem]{corollary}

\usepackage{authblk}

\author{Jacob Holm\thanks{This research is supported by Mikkel Thorup's Advanced Grant DFF-0602-02499B from the Danish Council for Independent Research under the Sapere Aude research career programme.}}
\author{Eva Rotenberg}
\author{Mikkel Thorup$^\ast$}
\affil{University of Copenhagen (DIKU),\\
    jaho@di.ku.dk, eva@rotenberg.dk, mthorup@di.ku.dk} 

\title{%
    Dynamic Bridge-Finding in $\widetilde{O}(\log ^2 n)$ Amortized Time}

\begin{document}

\thispagestyle{empty}
\maketitle
\begin{abstract}
  We present a
  deterministic
  fully-dynamic data structure
  for maintaining information about the bridges in a graph.
  We support updates in $\widetilde{O}((\log n)^2)$ amortized time, and can find a bridge in the component of any given vertex, or a bridge separating any two given vertices, in $\Oo(\log n / \log \log n)$ worst case time.
  Our bounds match the current best for bounds for deterministic 
  fully-dynamic connectivity up to $\log\log n$ factors.

  The previous best dynamic bridge finding was an 
  $\widetilde{O}((\log n)^3)$
  amortized time algorithm by Thorup [STOC2000], which was a bittrick-based improvement on the $\Oo((\log n)^4)$ amortized time algorithm by Holm et al.[STOC98, JACM2001].

  Our approach is based on a different and purely combinatorial
  improvement of the algorithm of Holm et al., 
 which by itself gives a new combinatorial $\widetilde{O}((\log n)^3)$
  amortized time algorithm. Combining it with Thorup's bittrick, 
  we get down to the claimed $\widetilde{O}((\log n)^2)$
  amortized time. 

  Essentially the same new trick can be applied to the biconnectivity data structure from [STOC98, JACM2001], improving the amortized update time to $\widetilde{O}((\log n)^3)$.

  We also offer improvements in space. We describe a general
  trick which applies to both of our new algorithms, and to the old
  ones, to get down to linear space, where the previous best use
  $\Oo(m + n\log n\log\log n)$.

  Our result yields an improved running time for deciding whether a unique perfect matching exists in a static graph.%
\end{abstract}

 \setcounter{page}{0}
\thispagestyle{empty}
\newpage
\section{Introduction}
In graphs and networks, connectivity between vertices is a fundamental property. In real life, we often encounter networks that change over time, subject to insertion and deletion of edges. We call such a graph \emph{fully dynamic}. Dynamic graphs call for dynamic data structures that maintain just enough information about the graph in its current state to be able to promptly answer queries. 

Vertices of a graph are said to be \emph{connected} if there exists a path between them, and \emph{$k$-edge connected} if no sequence of $k-1$ edge deletions can disconnect them. A \emph{bridge} is an edge whose deletion would disconnect the graph.
In other words, a pair of connected vertices are $2$-edge connected if they are not separated by a bridge.  By Menger's Theorem~\cite{menger1927allgemeinen}, this is equivalent to saying that a pair of connected vertices are $2$-edge connected if there exist two edge-disjoint paths between them. By edge-disjoint it is meant that no edge appears in both paths.

For dynamic graphs, the first and most fundamental property to be studied was that of dynamic connectivity. In general, we can assume the graph has a fixed set of $n$ vertices, and we let $m$ denote the current number of edges in the graph. The first data structure with sublinear $\Oo(\sqrt n)$ update time is due to Frederickson~\cite{DBLP:journals/siamcomp/Frederickson85} and Eppstein et al.~\cite{Eppstein93improvedsparsification}%
. Later, Frederickson~\cite{DBLP:journals/siamcomp/Frederickson97} and Eppstein et al.~\cite{Eppstein93improvedsparsification} gave a data structure with $\Oo(\sqrt n)$ update time for $2$-edge connectivity%
.
Henzinger and King achieved poly-logarithmic expected amortized time~\cite{Henzinger1997}, that is, an expected amortized update time of $\Oo((\log n)^3)$, and $\Oo(\log n /\log \log n)$ query time for connectivity.  And in~\cite{Henzinger97fullydynamic}, $\Oo((\log n)^5)$ expected amortized update time and $\Oo(\log n)$ worst case query time for $2$-edge connectivity.
The first polylogarithmic deterministic result was by Holm et al. announced in~\cite{Holm:1998}, see \cite{Holm:2001} for a journal version; an amortized deterministic update time of $\Oo((\log n)^2)$ for connectivity, and $\Oo((\log n)^4)$ for $2$-edge connectivity. The update time for deterministic dynamic connectivity has later been improved to $\Oo((\log n)^2 /\log \log n)$ by Wulff-Nilsen~\cite{Wulff-Nilsen16}.
Sacrificing determinism, an $\Oo(\log n(\log\log n)^3)$ structure for connectivity was presented by Thorup~\cite{Thorup:2000}, and later improved to  $\Oo(\log n (\log\log n)^2)$ by Huang et al.~\cite{Huang:2017}.
In the same paper, Thorup obtains an update time of $\Oo((\log n)^3 \log \log n)$ for deterministic $2$-edge connectivity. 
Interestingly, Kapron et al.~\cite{Kapron:2013} gave a Monte Carlo-style randomized data structure with polylogarithmic worst case update time for dynamic connectivity, namely, $\Oo((\log n)^4)$ per edge insertion, $\Oo((\log n)^5)$ per edge deletion, and $\Oo(\log n/\log\log n)$ per query. This was later improved by Gibbs et al.~\cite{DBLP:journals/corr/GibbKKT15} to $\Oo((\log n)^4)$ worst case update time and sublinear $O(n\log^2 n)$ space.  We know of no similar worst-case result for bridge finding.
The same paper~\cite{DBLP:journals/corr/GibbKKT15} also gives the first sublinear-space $O(n\log^2 n)$ space data structure for (amortized)
$2$-edge connectivity, by using the sublinear-space connectivity
data structure to maintain a sparse subgraph preserving $2$-edge
connectivity and then using the existing $2$-edge connectivity
data structure from Holm et al.~\cite{Holm:2001} as a black box on that
subgraph.

The best lower bound known is by \Patrascu{} et al.~\cite{patrascu2006logarithmic},
which shows a trade-off between update time $t_u$ and query time $t_q$ of $t_q \lg\frac{t_u}{t_q} = \Omega (\lg n)$ and $t_u \lg\frac{t_q}{t_u} = \Omega (\lg n)$.

\subsection{Our results}
We obtain an update time of $\Oo((\log n)^2(\log\log n)^2)$ and a query time of $\Oo(\log n /\log \log n)$ for the bridge finding problem:
\begin{theorem} \label{thm:MainThm}
    There exists a deterministic data structure for dynamic multigraphs in the word RAM model with $\Omega(\log n)$ word size, that uses $\Oo(m+n)$ space, and can handle the following updates, and queries for arbitrary vertices $v$ or arbitrary connected vertices $v,u$: 
    \begin{itemize}
        \item insert and delete edges in $\Oo((\log n)^2(\log\log n)^2)$ amortized time,
        \item %
        find a bridge in $v$'s connected component or determine that none exists, %
        or find a bridge that separates $u$ from $v$ or determine that none exists. Both in $\Oo(\log n /\log \log n)$ worst-case time. %
        \item %
        find the size of $v$'s connected component in $\Oo(\log n/\log\log n)$ worst-case time, or the size of its $2$-edge connected component in $\Oo(\log n(\log\log n)^2)$ worst-case time.

    \end{itemize}
\end{theorem}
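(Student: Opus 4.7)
The plan is to build on the level-based deterministic 2-edge connectivity framework of Holm et al.\ and upgrade it with two orthogonal improvements: a new combinatorial amortization argument that saves one $\log n$ factor, and Thorup's bittrick that saves another. First I would represent the graph by a hierarchy of spanning forests $F_0 \supseteq F_1 \supseteq \cdots \supseteq F_{\lmax}$ with $\lmax=\lfloor \log n\rfloor$, where each edge $e$ has a level $\ell(e)\in\{0,\ldots,\lmax\}$ and belongs to $F_i$ iff $\ell(e)\geq i$. Each $F_i$ is stored in a top tree that, for every tree edge, tracks the maximum level of any non-tree edge covering it (its cover level); a bridge is exactly a tree edge with cover level $-1$ in $F_0$.

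Next I would implement edge insertion and deletion using the standard Holm et al.\ replacement search: upon removing a tree edge, I scan non-tree edges of sufficient level for a reconnecting replacement, promoting those that fail. The level invariants give $\Oo(\log n)$ amortized promotions per inserted edge, and plugging in top trees supporting $\Oo(\log n/\log\log n)$ per operation yields $\Oo((\log n)^2/\log\log n)$ for pure connectivity updates. For 2-edge connectivity, maintaining the cover-level information along search paths naively incurs an extra $\log n$ factor. The new combinatorial trick processes candidate cover edges in batches at each visited level and uses a refined potential that charges cover-level changes against level promotions, removing one $\log n$. Applying Thorup's bittrick to process $\Theta(\log n/\log\log n)$ level-indicators packed into a single word saves the remaining factor, producing the claimed $\Oo((\log n)^2(\log\log n)^2)$ amortized update bound.

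For the query bounds, finding a bridge in $v$'s component reduces to locating a tree edge of cover level $-1$ in the top tree containing $v$, done in $\Oo(\log n/\log\log n)$ by descending through the cluster hierarchy following minimum-cover aggregates; a bridge separating $u$ and $v$ is found by the same descent after an $\Expose(u,v)$. Component size is stored as a cluster aggregate and returned in $\Oo(\log n/\log\log n)$. The size of a 2-edge connected component is recovered by scanning the maximal stretch of tree edges with cover level $\geq 0$ containing $v$, which invokes $\Oo((\log\log n)^2)$ bitwise operations per top-tree step, giving the stated $\Oo(\log n(\log\log n)^2)$ bound. For space, I would apply a separate compression trick in which the per-level structures, together with the packed per-edge cover information, are stored so that each edge contributes $\Oo(1)$ words amortized, yielding total space $\Oo(m+n)$.

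The main obstacle is the combinatorial improvement itself: I must design the batched update routine and the refined potential so that each candidate replacement or cover edge is touched only a constant number of times per level visit, and verify that the invariants maintained by the top trees remain consistent as edges oscillate between covering tree edges and serving as replacement candidates. A secondary challenge is integrating this with the bittrick and the $\Oo(\log n/\log\log n)$ top trees without breaking the primitive that answers minimum cover-level queries along a path, which is what makes the query bounds tight.
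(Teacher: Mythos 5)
Your proposal correctly identifies the overall frame (the Holm et al.\ level/cover-level scheme, top trees, Thorup's approximate-counting bittrick, high-degree top trees for fast queries), but it is missing the one idea that actually makes the theorem work, and it misstates the underlying decomposition. On the decomposition: the paper keeps a \emph{single} spanning forest $T$ (all tree edges at level $\lmax$) in one top tree, with levels on non-tree edges and \emph{cover levels} on tree edges maintained only implicitly via lazy $\cover^{-}/\cover^{+}$ tags, because one update can change $\Omega(n)$ cover levels. Your ``hierarchy of forests $F_0\supseteq\cdots\supseteq F_{\lmax}$, each stored in a top tree'' is the connectivity-style layout; carried out literally it already costs $\Omega(n\log n)$ space and does not support the path-minimum-cover-level queries the reduction needs.

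The genuine gap is the combinatorial $\log$-factor saving. It is \emph{not} obtained by ``processing cover edges in batches with a refined potential that charges cover-level changes against promotions'': the high-level replacement-search algorithm and its amortization are left essentially unchanged (Lemma~\ref{lem:highlevel}), and the $\Oo(\log n)$ promotions per edge were already tight in Holm et al. The extra $\log n$ sits in the \emph{per-merge cost of maintaining the FindSize information}: each cluster must know, for each of $\lmax$ levels, an $\lmax$-dimensional size vector, and recomputing all of them at every merge costs $\Theta(\lmax^2)$. The paper's observation is that when two clusters are merged along the cluster path, all but one of the $\partsize_{C,v,i}$/$\diagsize_{C,v,i}$ vectors (indexed by the parts $\partpath_{C,v,i}$ of the cluster path) are inherited verbatim from the children --- only the vector at index $i=\cover_A$ is new --- so storing them in balanced search trees supporting split/concatenate/range-sum brings a merge down to $\Oo(\lmax\log\lmax)$ vector operations, and then the bittrick makes each vector operation $\Oo(\log\log n)$. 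Without this (or an equivalent) representation your plan stalls exactly where you flag the ``main obstacle.'' Two further points you gloss over: the $\Oo(\log n/\log\log n)$ query bound requires implementing Cover, Uncover, CoverLevel and MinCoveredEdge \emph{without} calling Expose (Expose in a degree-$b$ top tree costs $\Omega(b\log n/\log b)$), which forces the packed per-child buffers and the CleanToBuffer/CleanToChild machinery; and linear space comes from fat-bottomed top trees (base clusters of size $\Theta(\lmax\cdot D)$ whose information can be recomputed from scratch), not from a per-edge amortization of the level structures.
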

Since a pair of connected vertices are $2$-edge connected exactly when there is no bridge separating them, we have the following corollary:
\begin{corollary}
    There exists a data structure for dynamic multigraphs in the word RAM model with $\Omega(\log n)$ word size, that can answer $2$-edge connectivity queries in $\Oo(\log n/\log\log n)$ worst case time and handle insertion and deletion of edges in $\Oo((\log n)^2(\log\log n)^2)$ amortized time, with space consumption $\Oo(m+n)$.
\end{corollary}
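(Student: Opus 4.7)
The plan is to derive the Corollary directly from Theorem~\ref{thm:MainThm} by invoking the Menger-type equivalence already cited in the introduction: two connected vertices $u,v$ are $2$-edge connected if and only if no bridge separates them. Under this equivalence, an $\Oo(\log n/\log\log n)$ worst-case $2$-edge connectivity query for a pair $u,v$ reduces to two subroutines: (i) test whether $u$ and $v$ lie in the same connected component, and (ii) if so, invoke the separating-bridge query from Theorem~\ref{thm:MainThm} and return \emph{true} exactly when no such bridge is found; if the vertices are not connected, return \emph{false}.

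Step (ii) is provided verbatim by Theorem~\ref{thm:MainThm} in the stated worst-case time. For step (i), I would point out that the bridge-finding data structure of Theorem~\ref{thm:MainThm} is layered on top of a dynamic connectivity data structure (the spanning-forest based machinery underlying the whole line of work from Holm et al.\ onward), which supports the standard operation of finding a canonical representative of each vertex's component in $\Oo(\log n/\log\log n)$ worst-case time; two vertices are connected exactly when their representatives coincide. Thus (i) adds only $\Oo(\log n/\log\log n)$ worst-case time to the query.

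Combining the two subroutines yields the claimed $\Oo(\log n/\log\log n)$ worst-case bound for $2$-edge connectivity queries, while the $\Oo((\log n)^2(\log\log n)^2)$ amortized update time and the $\Oo(m+n)$ space are inherited verbatim from Theorem~\ref{thm:MainThm}. There is no genuine obstacle: the content of the Corollary lies entirely in the Menger equivalence combined with the pre-existing connectivity test inside the underlying structure, so the proof is essentially a one-line reduction.
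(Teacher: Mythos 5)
Your proposal is correct and matches the paper's own (one-line) derivation: the corollary is obtained from Theorem~\ref{thm:MainThm} by the Menger equivalence, answering a $2$-edge connectivity query with a connectivity test followed by the separating-bridge query, both supported in $\Oo(\log n/\log\log n)$ worst case. The only cosmetic difference is that the paper's underlying structure answers Connected$(v,w)$ directly via the top tree rather than via canonical component representatives, but this does not affect the argument.
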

Note that the query time is optimal with respect to the trade-off by \Patrascu{} et al.~\cite{patrascu2006logarithmic}

As a stepping stone on the way to our main theorem, we show the following:
\begin{theorem}\label{thm:Combinatorial}
   There exists a combinatorial deterministic data structure for dynamic multigraphs on the pointer-machine without the use of bit-tricks, that uses $\Oo(m+n)$ space, and can handle insertions and deletions of edges in $\Oo((\log n)^3\log\log n)$ amortized time, find bridges and determine connected component sizes in $\Oo(\log n )$ worst-case time, and find $2$-edge connected component sizes in $\Oo((\log n)^2\log\log n)$ worst-case time.
\end{theorem}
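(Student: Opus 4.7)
The plan is to refine the $O((\log n)^4)$-time level-based 2-edge connectivity data structure of Holm et al.\ by replacing its core cover-maintenance mechanism with a new combinatorial scheme, while carefully engineering the supporting top-tree structures so as to avoid the extra logarithmic factor they pay.

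I would begin by recalling the HLT framework: maintain a hierarchy of spanning forests $F = F_0 \supseteq F_1 \supseteq \cdots \supseteq F_{\lmax}$ with $\lmax = O(\log n)$, with each edge assigned a level, tree edges residing in all forests up to their level, and each non-tree edge at level $\ell$ covering the unique path between its endpoints in $F_\ell$. A tree edge is a bridge iff it is uncovered. Each $F_\ell$ is represented by a top tree that stores, for every tree edge, its level and its cover level, together with path-minimum and size information needed to answer bridge and component-size queries and to support replacement-edge searches. Bridge queries and component-size queries then reduce to $O(\log n)$-time top-tree path/point queries.

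Next I would describe the update procedures. Insertion of a non-tree edge at level $0$ triggers a single cover update along one tree path (one $\Expose$ plus $O(\log n)$ internal work). Deletion of a tree edge is the hard case: after removing $e$, we must find a replacement that reconnects the two halves, and HLT does this by scanning levels from $\lmax$ down, promoting non-tree edges that fail to reconnect. The plan is to keep HLT's outer loop but to reorganise the cover-label data so that propagating them across a promotion costs $O(\log n \log\log n)$ instead of $O((\log n)^2)$. Concretely, the idea is to augment each cluster of the level-$\ell$ top tree with a succinct label that summarises, simultaneously for all levels $\ell' \ge \ell$, the path-minimum cover information along the cluster path, using a small secondary structure indexed by level. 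This way a cover update or cover query at one level can be resolved with a constant number of top-tree operations plus $O(\log\log n)$ label arithmetic, rather than $O(\log n)$ separate path walks.

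Summing the costs: each non-tree edge is promoted $O(\log n)$ times and each promotion costs $O((\log n)^2 \log\log n)$ work across the affected levels and clusters, giving the claimed $O((\log n)^3 \log\log n)$ amortized bound by the standard level-potential argument. The size of a $2$-edge connected component is obtained by walking the maximal uncovered tree-path around $v$ in the top tree, which touches $O(\log n)$ clusters at cost $O(\log n \log\log n)$ each. For the $O(m+n)$ space bound, I would share the spanning-forest representation across levels by storing a single top tree and attaching level-indexed fields whose total size is bounded by the number of distinct tree-edge--level incidences that are actually nontrivial, rather than allocating $O(\log n)$ independent top trees.

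The main obstacle will be showing that the new packed cover label integrates cleanly with top-tree rotations and with the \Merge/\Split/\Expose interface: the label of a merged cluster must be recomputable from those of its children in $O(\log\log n)$ amortized time, a split must undo this just as cheaply, and the level-based potential must still dominate the combined cost of all such label recomputations incurred per promotion. Getting this potential argument to work, and verifying that the linear-space sharing does not break the label bookkeeping, is where I expect the bulk of the technical difficulty to lie.
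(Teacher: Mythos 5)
Your high-level architecture matches the paper's: the HLT level/cover framework, top trees storing per-cluster cover and size information, $\Oo(\log n)$ amortized promotions per edge paid for by the level potential, and a final space-saving pass. But the mechanism you propose for the speedup targets the wrong bottleneck, and the actual key idea is missing. In this framework the cover information (path-minimum cover level, lazy cover/uncover tags) already costs only $\Oo(1)$ per merge/split and was never the expensive part. The $\Oo((\log n)^2)$-per-merge cost in HLT comes from the \emph{size} data: each path cluster must know, for every level $i$, how many vertices hang off its cluster path with connections covered at level $\geq i$, which HLT stores as a $\Theta(\log n)\times\Theta(\log n)$ matrix recomputed at every merge. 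Your ``succinct label that summarises path-minimum cover information with $\Oo(\log\log n)$ label arithmetic'' does not address this, and in a model that explicitly forbids bit tricks it is unclear what $\Oo(\log\log n)$ arithmetic on $\Theta(\log n)$ level-indexed quantities, each a $\Theta(\log n)$-bit count, could even mean. So the central claim --- that a promotion costs $\Oo((\log n)^2\log\log n)$ --- is asserted but not substantiated.

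The paper's actual new idea is a structural observation about the size matrix: partition the cluster path $\pi(C)$ into $\Oo(\log n)$ contiguous parts according to the cover level toward a boundary vertex, and note that when two clusters merge along a path, all but \emph{one} of these parts (the one at index $\cover_A$) are inherited verbatim from the children. Storing the parts' size vectors in a balanced search tree supporting split, concatenate, and subtree sums then makes each merge a constant number of list operations touching $\Oo(\log\log n)$ nodes, each node update costing $\Oo(\log n)$ for a vector addition --- hence $\Oo(\log n\log\log n)$ per merge, $\Oo((\log n)^2\log\log n)$ per tree operation, and $\Oo((\log n)^3\log\log n)$ amortized per edge update. Without this inheritance argument (or an equivalent), your plan degenerates back to HLT's $\Oo((\log n)^4)$. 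Two smaller issues: the $\Oo(\log n)$ worst-case bridge and component-size queries require running a cheap CoverLevel-only structure in parallel with the expensive FindSize structure and routing each query to the faster one; and the linear-space bound in the paper comes from fattening the base clusters to size $\Theta(\log n\cdot D)$ so the $\Theta(\log n)$ vectors per cluster are amortized over the cluster's edges --- your ``count only nontrivial edge--level incidences'' sharing does not obviously bound the per-cluster vector storage, which can genuinely be $\Theta(\log n)$ vectors of dimension $\Theta(\log n)$ for essentially every cluster.
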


Our results are based on modifications to the $2$-edge connectivity data structure from~\cite{Holm:2001}.  Applying the analoguous modification to the biconnectivity data structure from the same paper yields a structure with $\Oo((\log n)^3(\log\log n)^2)$ amortized update time and $\Oo((\log n)^2(\log\log n)^2)$ worst case query time.  The details of this modification are beyond the scope of this paper.  %

\subsection{Applications}\label{sec:applications}
Although our data structure is deterministic and uses linear space, it entails an improvement of the current best sublinear-space data structure. 
Namely, the Monte-Carlo randomized sublinear-space $2$-edge connectivity data structure 
by Gibbs et al.~\cite{DBLP:journals/corr/GibbKKT15} uses the data structure 
from~\cite{Holm:2001} as a black box:  For each update the data structure uses worst case $O(\log^5 n)$ time by itself, and makes $O(\log^2 n)$ updates in the sparse graph seen by the black box. Thus, with the $O(\log^4 n)$ amortized update time from~\cite{Holm:2001}, this gives a sublinear-space data structure with amortized $O(\log^6 n)$ update time. Using the data structure from~\cite{Thorup:2000} or our new purely combinatorial data structure, this drops to $\Oo((\log n)^5\log\log n)$ amortized time. With our new $\tilde{O}(\log ^2 n)$ update time data structure, this improves to $\Oo(\log^5 n)$ amortized time (and the bottleneck is now in the reduction).

While dynamic graphs are interesting in their own right, many algorithms and theorem proofs for static graphs rely on decremental or incremental graphs. Take for example the problem of whether or not a graph has a unique perfect matching. The following theorem by Kotzig immediately yields a near-linear time algorithm if implemented together with a decremental $2$-edge connectivity data structure with poly-logarithmic update time:
\begin{theorem}[A. Kotzig '59~\cite{Kotzig59}]
        Let $G$ be a connected graph with a unique perfect matching $M$. Then $G$ has a bridge that belongs to $M$.
\end{theorem}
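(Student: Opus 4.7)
The plan is to argue by induction on $|V(G)|$. The base case $|V(G)| = 2$ is immediate, since the unique matching edge is itself a bridge. For the inductive step, consider whether $G$ has any bridge. If some bridge $e \in M$ exists, we are done. If instead there is a bridge $e \notin M$, then $G - e$ splits into two connected components $G_1$ and $G_2$, and the restriction $M|_{G_i}$ is a perfect matching of $G_i$, necessarily unique: any alternative perfect matching of $G_1$ could be combined with $M|_{G_2}$ to produce a second perfect matching of $G$, contradicting uniqueness. Applying the inductive hypothesis to each $G_i$ yields a bridge of $G_i$ lying in $M|_{G_i}$, and such an edge remains a bridge of $G$ because removing it splits its host component into two parts, only one of which can be reconnected to the other component through $e$.

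The main obstacle is the remaining case in which $G$ has no bridges at all, i.e., $G$ is $2$-edge-connected. Here the plan is to derive a contradiction to uniqueness by exhibiting an $M$-alternating cycle, i.e., a cycle whose edges alternate between $M$ and $E(G) \setminus M$; toggling $M$-membership along such a cycle yields a second perfect matching. Starting from any matched vertex $v_0$, I would greedily build an infinite alternating walk $v_0, v_1, v_2, \ldots$ by setting $v_{2i+1}$ to be the matching partner of $v_{2i}$ and choosing $v_{2i+2}$ to be any neighbor of $v_{2i+1}$ different from $v_{2i}$. Such a neighbor always exists because the matching edge $v_{2i} v_{2i+1}$ is not a bridge, forcing $\deg(v_{2i+1}) \geq 2$.

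To extract an actual alternating cycle from this infinite walk, I would use that $|V(G)|$ is finite, so by pigeonhole some vertex is revisited at two indices of the same parity, producing a closed alternating sub-walk of even length. Choosing such a repetition with the smallest index gap, a parity analysis exploiting the uniqueness of each vertex's matching partner shows that any interior repetition inside this sub-walk would force a strictly smaller same-parity repetition, contradicting minimality. Hence the minimal sub-walk is a simple $M$-alternating cycle, which yields the required second perfect matching, contradicting uniqueness and completing the induction.
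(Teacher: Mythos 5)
Your induction scaffolding is sound: the base case, the reduction across a bridge $e\notin M$ (restricting $M$ to the two components, inheriting uniqueness, and noting that a bridge of a component remains a bridge of $G$) are all correct. The entire weight of the theorem therefore rests on the bridgeless case, and that is where your argument has a genuine gap. The problem is the extraction of a simple cycle from the greedy alternating walk. An interior repetition $v_k=v_l$ with $k,l$ of opposite parity does \emph{not} force a smaller same-parity repetition: if, say, $k$ is even and $l$ is odd, then the uniqueness of matching partners only yields $v_{k+1}=v_{l-1}$, which is again an opposite-parity repetition, and applying the same reasoning to that pair just returns you to $v_k=v_l$. The underlying phenomenon is that the walk can traverse a matching edge in both directions (it ``reflects'' off a matching edge whenever it re-enters one at its far endpoint), and such a double traversal creates exactly the opposite-parity pinches that your minimality argument cannot rule out. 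The resulting minimal even closed sub-walk can repeat a matching edge, and toggling along it does not produce a second matching.

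Here is an explicit $2$-edge-connected counterexample to the extraction step. Take $V=\set{a,b,c,d,e,f,g,h}$ with $M=\set{ab,cd,ef,gh}$ and non-matching edges $bc,\,cg,\,ha,\,ae,\,de,\,df$; every edge lies on one of the cycles $a b c g h a$, $a b c d e a$, $d e f d$, so there are no bridges. Starting at $v_0=a$ and making the legal choices $e$ (from $d$) and $g$ (from $c$) at the two non-forced steps, the walk is
$a,b,c,d,e,f,d,c,g,h,a,b,c,\dots$, periodic with period $10$. Every same-parity repetition has index gap exactly $10$ (the only vertices repeated within a period are $c$ at positions $2,7$ and $d$ at positions $3,6$, both opposite-parity), so the minimal same-parity window is $v_0,\dots,v_{10}$ --- which revisits $c$ and $d$ and traverses the matching edge $cd$ in both directions. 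It is not a simple cycle, and its non-matching edges $bc,de,df,cg,ha$ do not form a matching. (The graph does contain the alternating cycle $a,b,c,d,f,e,a$, consistent with Kotzig's theorem; your walk simply need not find it.) To repair the bridgeless case you need a different mechanism --- for instance growing a \emph{vertex-simple} maximal alternating path and analysing where it must close up, or arguing directly that a $2$-edge-connected graph with a perfect matching admits a second one --- rather than relying on parity minimality in an unrestricted walk.
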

\noindent{}The near-linear algorithm for finding a unique perfect matching by Gabow, Kaplan, and Tarjan~\cite{Gabow:2001} is straight-forward: Find a bridge and delete it. If deleting it yields connected components of odd size, it must belong to the matching, and all edges incident to its endpoints may be deleted---if the components have even size, the bridge cannot belong to the matching. Recurse on the components. Thus, to implement Kotzig's Theorem, one has to implement three operations: One that finds a bridge, a second that deletes an edge, and a third returning the size of a connected component. 

Another example is Petersen's theorem~\cite{petersen1891} which states that any cubic, $2$-edge connected graph contains a perfect matching. An algorithm by Biedl et al.~\cite{BIEDL2001110} finds a perfect matching in such graphs in $\Oo(n \log ^4 n)$ time, by using the Holm et al $2$-edge connectivity data structure as a subroutine. In fact, one may implement their algorithm and obtain running time $\Oo(n f(n))$, by using as subroutine a data structure for amortized decremental $2$-edge connectivity with update-time $f(n)$. Here, we thus improve the running time 
from $\Oo(n (\log n)^3 \log \log n)$ 
to $\Oo(n (\log n)^2 (\log \log n)^2)$.

In 2010, Diks and Stanczyk~\cite{Diks2010} improved Biedl et al.'s algorithm for perfect matchings in $2$-edge connected cubic graphs, by having it rely only on dynamic connectivity, not $2$-edge connectivity, and thus obtaining a running time of $\Oo(n (\log n)^2 / \log \log n)$ for the deterministic version, or $\Oo(n \log n (\log \log n)^2)$ expected running time for the randomized version. However, our data structure still yields a direct improvement to the original algorithm by Biedl et al. 

Note that all applications to static graphs have in common that it is no disadvantage that our running time is amortized.

\subsection{Techniques}
As with the previous algorithms, our result is based on top
trees~\cite{Alstrup:2005} which is a hierarchical tree structure used
to represent information about a dynamic tree --- in this case, a
certain spanning tree of the dynamic graph. The original $\Oo((\log
n)^4)$ algorithm of Holm et al.~\cite{Holm:2001} stores $\Oo((\log
n)^2)$ counters with each top tree node, where each counter represent
the size of a certain subgraph. Our new $\Oo((\log n)^3)$ algorithm
applies top trees the same way, representing the same $\Oo((\log
n)^2)$ sizes with each top tree node, but with a much more efficient
implicit representation of the sizes.

Reanalyzing the algorithm of Holm et al.~\cite{Holm:2001}, we show
that many of the sizes represented in the top nodes are identical,
which implies that that they can be represented more efficiently
as a list of actual differences. We then need additional
data structures to provide the desired sizes, and we have to be very
careful when we move information around as the top tree changes,
but overall, we gain almost a log-factor in the amortized time bound,
and the algorithm remains purely combinatorial.

Our combinatorial improvement can be composed with the bittrick
improvement of Thorup~\cite{Thorup:2000}. Thorup represents
the same sizes as the original algorithm of Holm et al., but observes
that we don't need the exact
sizes, but just a constant factor approximation.  Each approximate size
can be represented with only $\Oo(\log\log n)$ bits, and we can
therefore pack $\Omega(\log n/\log\log n)$ of them together in a
single $\Omega(\log n)$-bit word.  This can be used to reduce the cost
of adding two $\Oo(\log n)$-dimensional vectors of approximate sizes
from $\Oo(\log n)$ time to $\Oo(\log\log n)$ time.  It may not be
obvious from the current presentation, but it was a significant
technical difficulty when developing our $\Oo((\log n)^3\log\log n)$
algorithm to make sure we could apply this technique and get the
associated speedup to
$\Oo((\log n)^2(\log\log n)^2)$.

The ``natural'' query time of our algorithm is the same as its update time.
In order to reduce the query time, we observe that we can augment the main algorithm to maintain a secondary structure that can answer queries much faster.  This can be used to reduce the query time for the combinatorial algorithm to $\Oo(\log n)$%
, and for the full algorithm to the optimal $\Oo(\log n/\log\log n)$.

The secondary structure needed for the optimal $\Oo(\log n/\log\log n)$ query time uses top trees of degree $\Oo(\log n/\log\log n)$.  While the use of non-binary trees is nothing new, we believe we are the first to show that such top trees can be maintained in the ``natural'' time.

Finally, we show a general technique for getting down to linear space, using top trees whose base clusters have size $\Theta(\log^c n)$.

\subsection{Article outline}
In Section~\ref{sec:DynTreeOp}, we recall how~\cite{Holm:2001} fundamentally solves $2$-edge connectivity via a reduction to a certain set of operations on a dynamic forest. In Section~\ref{sec:toptrees}, we recall how top trees can be used to maintain information in a dynamic forest, as shown in~\cite{Alstrup:2005}. 
In Sections~\ref{sec:coverlevel},~\ref{sec:findsize}, and~\ref{sec:findfirstlabel}, we describe how to support the operations on a dynamic tree needed to make a combinatorial $\Oo((\log n)^3\log\log n)$ algorithm for bridge finding, as stated in Theorem~\ref{thm:Combinatorial}. 
Then, in Section~\ref{sec:ApproxCount}, we show how to use Approximate Counting to get down to $\Oo((\log n)^2 (\log \log n)^2)$ update time, thus, reaching the update time of Theorem~\ref{thm:MainThm}. 
We then revisit top trees in Section~\ref{sec:toprev}, and introduce the notion of $B$-ary top trees, as well as a general trick to save space in complex top tree applications.
We proceed to show how to obtain the optimal $\Theta(\log n / \log \log n)$ query time in Section~\ref{sec:fastQuery}.
Finally, in Section~\ref{sec:fat}, we show how to achieve optimal space, by only storing cluster information with large clusters, and otherwise calculating it from scratch when needed. 

\section{Reduction to operations on dynamic trees}\label{sec:DynTreeOp}

In~\cite{Holm:2001}, $2$-edge connectivity was maintained via
operations on dynamic trees, as follows.  For each edge $e$ of the
graph, the algorithm explicitly maintains a \emph{level}, $\ell(e)$,
between $0$ and $\lmax=\floor{\log_2 n}$ such that the
edges at level $\lmax$ form a spanning forest $T$, and such that the
$2$-edge connected components in the subgraph induced by edges at
level at least $i$ have at most $\floor{n/2^i}$ vertices. For each
edge $e$ in the spanning forest, define the \emph{cover level},
$c(e)$, as the maximum level of an edge crossing the cut defined by
removing $e$ from $T$, or $-1$ if no such edge exists. The cover levels are only maintained implicitly, because each edge insertion and deletion can change the cover levels of $\Omega(n)$ edges.  Note that the
bridges are exactly the edges in the spanning forest with cover level
$-1$.  The algorithm explicitly maintains the spanning forest $T$ using a
dynamic tree structure supporting the following operations:
\begin{enumerate}[itemsep=-1pt,topsep=2pt]
\item\label{it:link} Link$(v,w)$. Add the edge $(v,w)$ to the dynamic tree, implicitly setting its cover level to $-1$.
\item\label{it:cut} Cut$(v,w)$. Remove the edge $(v,w)$ from the dynamic tree.
\item\label{it:conn} Connected$(v,w)$. Returns $\true$ if $v$ and $w$ are in the same tree, $\false$ otherwise.
\item\label{it:cover} Cover$(v,w,i)$. For each edge $e$ on the tree path from $v$ to $w$ whose cover level is less than $i$, implicitly set the cover level to $i$. (Called when an edge was inserted at level $i=0$ or had its level incresed to $i>0$.)
\item\label{it:uncover} Uncover$(v,w,i)$. For each edge $e$ on the tree path from $v$ to
  $w$ whose cover level is at most $i$, implicitly set the cover level
  to $-1$. (Called when the knowledge we had about whether the edges on the tree
  path from $v$ to $w$ were covered at level $\leq i$ is no longer
  valid because some edge was deleted. This may temporarily set some
  cover levels too low, but the algorithm fixes that using subsequent
  calls to Cover.)
\item\label{it:cl1} CoverLevel$(v)$. Return the minimal cover level of any edge in the tree containing $v$.
\item\label{it:cl2} CoverLevel$(v,w)$. Return the minimal cover level of an edge on the path from $v$ to $w$.  If $v=w$, we define CoverLevel$(v,w)=\lmax$.
\item\label{it:mce1} MinCoveredEdge$(v)$. Return any edge in the tree containing $v$ with minimal cover level. (Find a bridge, anywhere in the tree.)
\item\label{it:mce2} MinCoveredEdge$(v,w)$. Returns a tree-edge on the path from $v$ to $w$ whose cover level is CoverLevel$(v,w)$. (Find a bridge on the given path.)
  \label{it:mce}
\item\label{it:al} AddLabel$(v,l,i)$. Associate the \emph{user label} $l$ to the vertex $v$ at level $i$. (Insert a non-tree edge.)
\item\label{it:rl} RemoveLabel$(l)$.  Remove the user label $l$ from its vertex $\vertex(l)$. (Delete a non-tree edge.)
\item\label{it:ffl} FindFirstLabel$(v,w,i)$.
{\sloppy
  Find a user label at level $i$ such that the associated vertex $u$ has\footnote{$\meet(u,v,w)$ is defined as the unique vertex that is on all simple paths beween any two of $u$, $v$, and $w$.} \mbox{CoverLevel$(u,\meet(u,v,w))\geq i$}; among such user labels, return the one that minimizes the distance from $v$ to $\meet(u,v,w)$. (Find the first candidate witness that part of the path $v \cdots w$ is covered on level $i$, or a non-tree edge to swap with when deleting a covered tree edge.)
}
\item\label{it:fs} FindSize$(v,w,i)$.
  Find the number of vertices $u$ such that \mbox{CoverLevel$(u,\meet(u,v,w))\geq i$}. (Determine the size of the $2$-edge connected component at level $i$ that would result from increasing the level of $(v,w)$ to $i$, or for $v=w$ just find the size of the $2$-edge connected component at level $i$ that contains $v=w$. Thus the size of the $2$-edge component of $v$ in the whole graph is $\FindSize(v,v,0)$.).
  Note that FindSize$(v,v,-1)$ is just the number of vertices in the tree containing $v$ (which is also the size of the connected component of $v$).
\end{enumerate}

\begin{restatable}[Essentially the high level algorithm from~\cite{Holm:2001}]{lemma}{lemhighlevel}
  \label{lem:highlevel}
  There exists a deterministic reduction for dynamic graphs with $n$ nodes, that, when starting with an empty graph, supports any sequence of $m$ Insert or Delete operations using:
  \begin{itemize}
  \item $\Oo(m)$ calls to Link, Cut, Uncover, and CoverLevel.
  \item $\Oo(m\log n)$ calls to Connected, Cover, AddLabel, RemoveLabel, FindFirstLabel, and FindSize.
  \end{itemize}
  And that can answer FindBridge queries using a constant number of calls to Connected, CoverLevel, and MinCoveredEdge, and size queries using a single call to FindSize.
\end{restatable}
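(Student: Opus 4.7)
The plan is to recast the level-based high-level algorithm of~\cite{Holm:2001} in terms of the enumerated dynamic-tree operations and then apply the standard level-potential amortization. Throughout, I maintain an explicit level $\ell(e)\in\{0,\ldots,\lmax\}$ for every edge, together with the invariant that in the subgraph induced by edges of level at least $i$, every 2-edge connected component contains at most $\lfloor n/2^i\rfloor$ vertices. Tree edges sit implicitly at level $\lmax$; each non-tree edge is represented by a pair of user labels, one at each endpoint, tagged with its current level.

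For Insert$(u,v)$, a single call to Connected$(u,v)$ decides the case. If $u$ and $v$ are not connected, a single Link$(u,v)$ adds $(u,v)$ as a bridge at implicit cover level $-1$. Otherwise, I call AddLabel at level $0$ on $u$ and on $v$, and then Cover$(u,v,0)$ to mark that the tree path from $u$ to $v$ is now level-$0$ covered. Each insertion thus uses only $O(1)$ operations from the enumerated list.

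For Delete$(u,v)$, I split on whether $(u,v)$ is a tree edge, which can be decided from the labels. If it is a non-tree edge of level $i$, I call RemoveLabel on both endpoint-labels and then Uncover$(u,v,i)$ to invalidate the cached coverage along that path. Walking $j$ from $i$ down to $0$, I use FindFirstLabel$(u,v,j)$ to locate, level by level, the next non-tree edge that still covers the path at level $j$, and reinstall its coverage with Cover. If $(u,v)$ is a tree edge, I first read $c\gets{}$CoverLevel$(u,v)$, call Cut, and then search for a replacement: for $j=c$ down to $0$ I repeatedly call FindFirstLabel inside the sub-tree of one endpoint. Candidates whose other endpoint also lies on the same side of the cut correspond to non-tree edges that may be promoted from level $j$ to $j+1$; before any such promotion I use FindSize to verify that the resulting level-$(j+1)$ 2-edge connected component still has size at most $\lfloor n/2^{j+1}\rfloor$. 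The first candidate that actually crosses the cut is installed as the replacement tree edge via RemoveLabel followed by Link, after which Cover is called at the appropriate levels to restore the coverage information.

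The amortization is the standard level-potential argument of~\cite{Holm:2001}: Link, Cut, the initial CoverLevel, and the initial Uncover account for only $O(1)$ calls per update, hence $O(m)$ in total. All remaining invocations of Connected, Cover, AddLabel, RemoveLabel, FindFirstLabel, and FindSize are charged either directly to the update (only $O(\log n)$ per update, since the loop on $j$ ranges over $O(\log n)$ levels) or to an increase in the level of some non-tree edge, and each non-tree edge is promoted at most $\lmax=O(\log n)$ times during its lifetime, yielding $O(m\log n)$ in total. The query protocols are immediate: a bridge in $v$'s component is CoverLevel$(v)$ followed, if that returns $-1$, by MinCoveredEdge$(v)$; a bridge separating $u$ from $v$ follows from Connected$(u,v)$, CoverLevel$(u,v)$, and MinCoveredEdge$(u,v)$; and sizes are a single FindSize call. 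The real difficulty this lemma defers---and what the rest of the paper is devoted to---is implementing each individual tree operation efficiently on the underlying top tree; the accounting here is independent of that implementation.
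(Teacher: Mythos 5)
Your overall route is the same as the paper's: translate Insert/Delete into the enumerated tree operations and charge everything beyond an $\Oo(\log n)$ per-update overhead to level increases of non-tree edges, each of which can rise at most $\lmax$ times. The query protocols and the $\Oo(m)$ versus $\Oo(m\log n)$ bucketing also match the paper's accounting.

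However, there is a gap at exactly the one step the paper's own proof singles out as non-trivial: finding the replacement for a deleted tree edge $e=(v,w)$ with $\alpha=\CoverLevel(v,w)\geq 0$. FindFirstLabel only returns \emph{candidate} non-tree edges at level $\alpha$ incident to one side; a candidate need not cross the cut. Your protocol promotes non-crossing candidates after a FindSize check, but you never say what happens when that check \emph{fails} for a non-crossing candidate: you cannot promote it (the size invariant would break), you cannot discard it, and you have not yet found a crossing edge, so the search stalls. The paper's resolution is to compare $\FindSize(v,v,\alpha)$ and $\FindSize(w,w,\alpha)$ and search only the \emph{smaller} side: since the level-$\alpha$ component containing $e$ has at most $\floor{n/2^{\alpha}}$ vertices, the smaller side has at most $\floor{n/2^{\alpha+1}}$, so every non-crossing level-$\alpha$ edge incident to that side can safely be promoted, and the search is guaranteed to reach a crossing edge (which exists because $e$ was covered at level $\alpha$). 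Without specifying this choice of side, your replacement search is not correct as stated. A smaller stylistic difference: the paper finds the replacement only at level $\alpha$ and then, after re-inserting $e$ as a non-tree edge at level $\alpha$ and calling Uncover, runs a separate Recover loop from $\alpha$ down to $0$ (each phase with a size budget of half the level-$i$ component, run from both ends); your version interleaves the replacement search with the descent over levels, which is workable but obscures where the guaranteed success of the replacement search comes from.
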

\begin{proof}
  See Appendix~\ref{app:highlevel} for a proof and pseudocode.
\end{proof}

\begin{table}[htb!]
    \center
    \begin{adjustbox}{
            max width={\dimexpr(\paperwidth+2\textwidth)/3\relax},
            Trim={\dimexpr(\paperwidth-\textwidth)/6\relax} {0pt}
        }
        \small
        \begin{tabular}{|r|l||c|c|c|c|c|}
            \hline
            \multirow{2}{*}{\#}&\multirow{2}{*}{Operation} & \multicolumn{5}{c|}{Asymptotic worst case time per call, using structure in section}
            \\
            && \ref{sec:coverlevel} & \ref{sec:findsize} & \ref{sec:findfirstlabel} & \ref{sec:ApproxCount} & \ref{sec:fastQuery}
            \\\hline
            \ref{it:link}&Link$(v,w,e)$ & \multirow{9}{*}{$\log n$} & \multirow{9}{*}{$(\log n)^2\log\log n$} & \multirow{9}{*}{$\log n\log\log n$} & \multirow{9}{*}{$\log n(\log\log n)^2$} & \multirow{2}{*}{$\frac{f(n)\log n}{\log f(n)}$}
            \\
            \ref{it:cut}&Cut$(e)$ & & & & & \\\cline{1-2}\cline{7-7}
            \ref{it:conn}&Connected$(v,w)$ & & & & & \multirow{7}{*}{$\frac{\log n}{\log f(n)}$} \\
            \ref{it:cover}&Cover$(v,w,i)$ & & & & & \\
            \ref{it:uncover}&Uncover$(v,w,i)$ & & & & & \\
            \ref{it:cl1}&CoverLevel$(v)$ & & & & & \\
            \ref{it:cl2}&CoverLevel$(v,w)$ & & & & & \\
            \ref{it:mce1}&MinCoveredEdge$(v)$ & & & & & \\
            \ref{it:mce2}&MinCoveredEdge$(v,w)$ & & & & & \\\hline
            \ref{it:al}&AddLabel$(v,l,i)$ & \multirow{3}{*}{-} & \multirow{3}{*}{-} & \multirow{3}{*}{$\log n\log\log n$} & \multirow{3}{*}{-} & \multirow{3}{*}{-}\\
            \ref{it:rl}&RemoveLabel$(l)$ & & & & & \\
            \ref{it:ffl}&FindFirstLabel$(v,w,i)$ & & & & & \\\hline
            \multirow{2}{*}{\ref{it:fs}}
            &FindSize$(v,w,i)$ & - & $(\log n)^2\log\log n$ & - & $\log n(\log\log n)^2$ & - \\\cline{3-7}
            &FindSize$(v,v,-1)$ & $\log n$ & $\log n$ & $\log n$ & $\log n$ & $\frac{\log n}{\log f(n)}$ \\\hline

            \multicolumn{2}{|l||}{} & \multicolumn{5}{c|}{Space cost, using structure in section} \\
            \hline
            \multicolumn{2}{|l||}{natively} & \multirow{2}{*}{$n$} & $n\log n$ & \multirow{2}{*}{$m+n$} & $n\log\log n$ & \multirow{2}{*}{$n$} \\\cline{4-4}\cline{6-6}
            \multicolumn{2}{|l||}{when modified as in Section~\ref{sec:fat}} & & $n$ & & $n$ &
            \\\hline
        \end{tabular}
    \end{adjustbox}
    \caption{\label{tbl:optimes}Data structures presented in this paper. In the last column, $f(n)\in\Oo(\frac{\log n}{\log\log n})$ can be chosen arbitrarily. %
    }
\end{table}

The algorithm in~\cite{Holm:2001} used a dynamic tree structure supporting all the operations in $\Oo((\log n)^3)$ time, leading to an $\Oo((\log n)^4)$ algorithm for bridge finding.  Thorup~\cite{Thorup:2000} showed how to improve the time for the dynamic tree structure to~$\Oo((\log n)^2\log\log n)$ leading to an $\Oo((\log n)^3\log\log n)$ algorithm for bridge finding.

Throughout this paper, we will show a number of data structures for dynamic trees, implementing various subsets of these operations while ignoring the rest (See Table~\ref{tbl:optimes}).  Define a \emph{CoverLevel} structure to be one that implements operations~\ref{it:link}--\ref{it:mce}, and a \emph{FindSize} structure to be a CoverLevel structure that additionally implements the FindSize operation. Finally, we define a \emph{FindFirstLabel} structure to be one that implements operations~\ref{it:link}--\ref{it:ffl} (all except for FindSize).

The point is that we can get different trade-offs between the operation costs in the different structures, and that we can combine them into a single structure supporting all the operations using the following  %

\begin{lemma}[Folklore]\label{lem:combine}
  Given two %
  data structures $S$ and $S'$ for the same problem consisting of a set $U$ of update operations and a set $Q$ of query operations.  If the respective update times are $f_u(n)$ and $f'_u(n)$ for $u\in U$, and the query times are $g_q(n)$ and $g'_q(n)$ for $q\in Q$, we can create a combined data structure running in $\Oo(f_u(n)+f'_u(n))$ time for update operation $u\in U$, and $\Oo(\min\!\set{g_q(n),g'_q(n)})$ time for query operation $q\in Q$.
\end{lemma}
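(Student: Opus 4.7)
The plan is to run both data structures in parallel on the same sequence of updates, and for each query to invoke whichever of the two answers it faster.

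Concretely, the combined data structure maintains side-by-side instances of $S$ and $S'$ that are always in the same logical state. To initialize, we build empty instances of both $S$ and $S'$ (or, in the offline case, we run the initialization routine of each). For any update operation $u \in U$, we apply $u$ to $S$ and then apply $u$ to $S'$. Since the two calls are independent, the total time spent is bounded by the sum of the two update costs, giving $\Oo(f_u(n) + f'_u(n))$ as required. A straightforward induction on the number of updates shows that after every update both $S$ and $S'$ represent exactly the same underlying object, so either one may be used to answer subsequent queries and they will produce the same result.

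For a query operation $q \in Q$, the time bounds $g_q(n)$ and $g'_q(n)$ depend only on $q$ and $n$, not on the current data. Hence for each $q$ we can decide at code-generation time (or by a single comparison of the two time bounds) which of $S$ or $S'$ offers the smaller worst-case cost, and dispatch the query to that structure. The resulting query cost is $\Oo(\min\!\set{g_q(n), g'_q(n)})$, matching the claim.

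There is no real obstacle here: the only thing to notice is that the lemma speaks of \emph{two} fixed time bounds per query operation, so the choice of which structure to consult is a function of the operation alone and does not require any runtime knowledge of which structure would actually be faster on a particular input. If one instead wanted a per-instance minimum rather than a per-operation minimum, one would have to run both queries in parallel and stop when the first finishes, losing a factor of two but still matching the stated bound up to constants. Either interpretation yields the lemma, which is why this statement is folklore.
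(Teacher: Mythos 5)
Your proof is correct and is essentially identical to the paper's: maintain both structures in parallel, apply every update to both, and dispatch each query to whichever structure has the smaller stated bound for that operation. The extra remarks about per-operation versus per-instance minima are fine but not needed.
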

\begin{proof}
  Simply maintain both structures in parallel.  Call all update operations on both structures, and call only the fastest structure for each query.
\end{proof}

\begin{proof}[Proof of Theorem~\ref{thm:Combinatorial}]
  Use the CoverLevel structure from Section~\ref{sec:coverlevel}, the FindSize structure from Section~\ref{sec:findsize}, and the FindFirstLabel structure from Section~\ref{sec:findfirstlabel}, and combine them into a single structure using Lemma~\ref{lem:combine}.  Then the reduction from Lemma~\ref{lem:highlevel} gives the correct running times but uses $\Oo(m+n\log n)$ space.  To get linear space, modify the FindSize and FindFirstLabel structures as described in Section~\ref{sec:fat}.
\end{proof}
\begin{proof}[Proof of Theorem~\ref{thm:MainThm}]
  Use the CoverLevel structure from Section~\ref{sec:fastQuery}, the FindSize structure from Section~\ref{sec:findsize}, as modified in Section~\ref{sec:ApproxCount} and~\ref{sec:fat}, and the FindFirstLabel structure from Section~\ref{sec:findfirstlabel}, and combine them into a single structure using Lemma~\ref{lem:combine}.  Then the reduction from Lemma~\ref{lem:highlevel} gives the required bounds.
\end{proof}

\section{Top trees}\label{sec:toptrees}

A \emph{top tree} is a data structure for maintaining information about each tree of a dynamic forest.
Let $T$ be a tree, and let $\boundary T$ be an arbitrary set of $1$ or $2$ vertices of $T$, which we will call the \emph{external boundary vertices} of $T$.  For any subgraph $S$ of $T$, define the \emph{boundary vertices} of $S$ (denoted $\boundary_{(T,\boundary T)} S$ or just $\boundary S$) as the set of vertices in $S$ that are either in $\boundary T$ or are incident to an edge not in $S$.
A cluster $C$ is a connected subgraph of $T$ with $1$ or $2$ boundary vertices\footnote{Note that this deviates from the existing literature, which introduces a special class of cluster with $0$ boundary vertices, which can only be present in the root~\cite{Alstrup:2005}}.
A top tree $\mathcal{T}$ is a rooted tree representing a recursive partition of $T$ into clusters.  The root of $\mathcal{T}$ corresponds to all of $T$, and each non-leaf node is an edge-disjoint union of the clusters of its children.  The leaves of $\mathcal{T}$ are called \emph{base clusters} and (usually\footnote{We will look at generalized top trees where this is not the case in Section~\ref{sec:toprev}}) correspond to the edges of $T$.

For every cluster $C$ %
 the
\emph{cluster path} of $C$, denoted $\pi(C)$, is the tree path in $T$
connecting $\boundary C$.  If $\sizeof{\boundary C}=2$ then $\pi(C)$
contains at least one edge, and we call $C$ a \emph{path cluster}.
Otherwise $\sizeof{\pi(C)}= 1$ and we call $C$ a \emph{point
  cluster}.  If $\sizeof{\boundary C}=1$ then $\pi(C)$ is the trivial
path consisting of the single boundary vertex.

A top tree is \emph{binary} if each node has at most two children.
We call a non-leaf node \emph{heterogeneous} if it has both a point cluster and a path cluster among its children, and \emph{homogeneous} otherwise.

A path cluster $D$ is called a \emph{path child} of its parent $C$ if $\pi(D)\subseteq\pi(C)$. Note that for binary top trees, a path cluster $D$ is a path child if and only if its parent $C$ is also a path cluster. But for non-binary top trees, even if $C$ and $D$ are both path clusters, $\partial D$ may intersect $\partial C$ only in a point, or not at all.

The top forest supports dynamic changes to the forest: insertion (link) or deletion (cut) of edges. Furthermore, it supports the \emph{expose} operation: expose($v$), or expose($v_1,v_2$), returns a top tree where $v$, or $v_1,v_2$, are external boundary vertices.  %
All operations are supported by performing a series of \emph{destroy}, \emph{create}, \emph{split}, and \emph{merge} operations: \emph{split} destroys a node of the top tree and replaces it with its children, while merge creates a parent as a union of its children. Destroy and create are the base cases for split and merge, respectively. Note that clusters can only be merged if they are edge-disjoint and their union is a cluster (i.e. is connected and has a boundary of size at most $2$).

\begin{theorem}[Alstrup, Holm, de Lichtenberg, Thorup~\cite{Alstrup:2005}]
For a dynamic forest on $n$ vertices we can maintain binary top trees of height $\Oo(\log n)$ supporting each link, cut or expose with a sequence of $\Oo(1)$ calls to create or destroy, and $\Oo(\log n)$ calls to merge or split. These top tree modifications are identified in $\Oo(\log n)$ time. The space usage of the top trees is linear in the
size of the dynamic forest.
\end{theorem}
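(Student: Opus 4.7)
The plan is to build the top tree via a round-based contraction scheme, in the spirit of parallel tree contraction. First, I would reduce to the bounded-degree case: each internal vertex of degree $d>3$ is replaced by a small gadget of degree-$3$ vertices joined by ``virtual'' zero-weight edges, so that every edge of $T$ still corresponds to exactly one base cluster and the total size remains $\Oo(n)$. Then, in each contraction round, I would greedily select a maximal set of disjoint local moves: a \emph{rake} merges a leaf cluster with its unique neighbor into a (point or path) cluster, and a \emph{compress} merges the two clusters incident at a degree-two internal vertex into a path cluster. A standard counting argument—every bounded-degree tree has a constant fraction of vertices that are either leaves or degree-two, and these admit a constant-density independent set of rake/compress moves—implies that each round eliminates a constant fraction of the current clusters. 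Iterating produces a binary top tree of height $\Oo(\log n)$ and total size $\Oo(n)$.

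For link, cut, and expose I would argue that the structural changes are confined to a constant number of root-to-leaf paths in the top tree. Inserting or removing a base cluster directly affects only its ancestors; at each level only a constant number of rake/compress pairings may need to be redone, because pairings that are disjoint from the affected ancestor remain valid under the local change. An expose is reduced to re-designating one or two vertices as external boundary, and then undoing and redoing the clusters on the top-tree paths from those vertices to the root, which again spans $\Oo(\log n)$ clusters. Walking up from the affected base cluster(s), and re-evaluating the local pairing level by level, identifies the necessary splits and merges in $\Oo(\log n)$ time per update.

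The hard part will be guaranteeing that the $\Oo(\log n)$ height, and hence the $\Oo(\log n)$ bound on merges and splits, is preserved under arbitrary update sequences: the purely greedy round-based scheme can be destabilized by an adversary who repeatedly inserts and deletes edges at the same spot. To handle this, I would stabilize the construction with a weight-balanced variant: each cluster carries a weight (e.g.\ the number of base clusters it contains), and pairings at each level are chosen so that the weight increases by a constant factor from one level to the next, analogous to BB[$\alpha$] trees. After an update, rebalancing is then performed by local rotation-style adjustments along the affected root-to-leaf paths, which both reasserts the weight invariant and bounds the total number of merges and splits by $\Oo(\log n)$. Linear space follows immediately from the geometric decrease per level: the top tree has $\Oo(n)$ internal nodes, each storing only $\Oo(1)$ structural words.
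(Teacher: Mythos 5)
This theorem is quoted from Alstrup et al.~\cite{Alstrup:2005}; the present paper gives no proof of it, so your proposal has to stand on its own. Your static construction is fine and is essentially the standard one: ternarize, then contract in rounds by a maximal independent set of rake/compress moves, losing a constant fraction of clusters per round, which gives height $\Oo(\log n)$ and linear size. The genuine gap is in the dynamic part, and you have in fact put your finger on it yourself: the claim that after a link or cut ``at each level only a constant number of rake/compress pairings may need to be redone'' is exactly the hard part, and it is false for a greedy maximal set of moves. A maximal matching of rake/compress moves along a path of degree-two vertices is not stable under a single local change --- unmatching one cluster can force its former partner to re-pair, which unmatches the next cluster, and so on, cascading along the entire level (the usual domino effect for maximal matchings on paths). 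So the assertion that ``pairings disjoint from the affected ancestor remain valid'' does not bound the work per level by a constant without a much more careful choice of which moves to perform.

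Your proposed repair --- weight-balancing with BB[$\alpha$]-style rotations --- does not obviously close this gap. A rotation in the cluster hierarchy must preserve the invariant that every node is an edge-disjoint union of its children forming a connected subtree with at most two boundary vertices; it is not shown that such rotations exist in general, nor that performing them along the affected root-to-leaf paths restores both the weight invariant and the per-level matching structure with only $\Oo(\log n)$ total merges and splits. The known resolutions are different: either a carefully designed deterministic update procedure for the levelled contraction that provably changes only $\Oo(1)$ clusters per level (Frederickson's topology trees, and the levelled top trees of~\cite{Alstrup:2005}, which is what Theorem~\ref{thm:toptrees-full} in this paper relies on), or an implementation of top trees on top of ST-trees with an amortized analysis. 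As written, your argument establishes the existence of a good top tree for a fixed forest but not that it can be maintained with $\Oo(\log n)$ merges and splits per update, which is the content of the theorem.
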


\section{A $\CoverLevel$ structure}\label{sec:coverlevel}
In this section we show how to maintain a top tree supporting the
CoverLevel operations.  This part is essentially the same as
in~\cite{Holm:1998,Holm:2001} (with minor corrections), but is included here for completeness because the rest of the paper builds on it.  Pseudocode for maintaining this structure is given in Appendix~\ref{app:coverlevel}.

For each cluster $C$ we want to maintain the following two integers
and up to two edges:
\begin{align*}
  \cover_C &:= \min\set{c(e)\cond e\in\pi(C)}\cup\set{\lmax}
  \\
  \globalcover_C &:= \min\set{c(e)\cond e\in C\setminus\pi(C)}\cup\set{\lmax}
  \\
  \minpathedge_C &:=
    \mathrlap{\argmin\limits_{e\in\pi(C)} c(e)}\hphantom{\argmin\limits_{e\in C\setminus\pi(C)} c(e)}
    \text{ if $\abs{\boundary C}=2$, and }
    \nil \text{ otherwise}
  \\
  \minglobaledge_C &:=
    \argmin\limits_{e\in C\setminus\pi(C)} c(e) \text{ if $C\neq\pi (C)$, and }
    \nil \text{ otherwise}
\end{align*}
Then
\begin{align*}
  \left.
  \begin{aligned}
    \CoverLevel(v) &= \globalcover_C
    \\
    \MinCoveredEdge(v) &= \minglobaledge_C
  \end{aligned}
  \right\}
  & \text{ where $C$ is the point cluster returned by $\Expose(v)$}
  \\
  \left.
  \begin{aligned}
    \CoverLevel(v,w) &= \cover_C
    \\
    \MinCoveredEdge(v,w) &= \minpathedge_C
  \end{aligned}
  \right\}
  &\text{ where $C$ is the path cluster returned by $\Expose(v,w)$}
\end{align*}

The problem is that when handling Cover or Uncover we cannot afford to
propagate the information all the way down to the edges.  When these
operations are called on a path cluster $C$, we instead implement them
directly in $C$, and then store ``lazy information'' in $C$ about what
should be propagated down in case we want to look at the descendants
of $C$.  The exact additional information we store for a path cluster $C$ is
\begin{align*}
  \cover^-_C &:= \text{max level of a pending Uncover, or $-1$}
  \\
  \cover^+_C &:= \text{max level of a pending Cover, or $-1$}
\end{align*}
We maintain the invariant that $\cover_C\geq\cover^+_C$, and if $\cover_C\leq\cover^-_C$ then $\cover_C=\cover^+_C$.

This allows us to implement Cover$(v,w,i)$ by first calling $\Expose(v,w)$, and then updating the returned path cluster $C$ as follows:
\begin{align*}
  \cover_C &= \max\set{\cover_C, i}
  &
  \cover^+_C &= \max\set{\cover^+_C, i}
\end{align*}
Similarly, we can implement Uncover$(v,w,i)$ by first calling $\Expose(v,w)$, and then updating the returned path cluster $C$ as follows if $\cover_C\leq i$:
\begin{align*}
  \cover_C &= -1
  &
  \cover^+_C &= -1
  &
  \cover^-_C &= \max\set{\cover^-_C, i}
\end{align*}

Together, $\cover^-_C$ and $\cover^+_C$ represent the fact that for each path descendant $D$ of $C$, if $\cover_D\leq\max\set{\cover^-_C,\cover^+_C}$\footnote{In~\cite{Holm:1998,Holm:2001} this condition is erroneously stated as $\cover_D\leq\cover^-_C$.}, we need to set $\cover_D=\cover^+_C$.
In particular whenever a path cluster $C$ is split, for each path child $D$ of $C$, if $\max\set{\cover_D,\cover^-_D}\leq\cover^-_C$ we need to set
\begin{align*}
  \cover^-_D &= \cover^-_C
\intertext{Furthermore, if $\cover_D\leq\max\set{\cover^-_C,\cover^+_C}$ we need to set}
  \cover_D &= \cover^+_C
  &
  \cover^+_D &= \cover^+_C
\end{align*}
Note that only $\cover_D$ is affected. None of $\globalcover_D$, $\minpathedge_D$, or $\minglobaledge_D$ depend directly on the lazy information.

Now suppose we have $k$ clusters\footnote{$k=2$ for now, but we will reuse this in section~\ref{sec:fastQuery} with a higher-degree top tree.} $A_1,\ldots,A_k$ that we want to merge into a single new cluster $C$. For $1\leq i\leq k$ define
\begin{align*}
  \globalcover'_{C,A_i} &:=
  \begin{cases}
    \mathrlap{\globalcover_{A_i}}\hphantom{\minglobaledge_{A_i}}
    &\text{if }\boundary A_i\subseteq \pi(C)
    \text{ or }\globalcover_{A_i}\leq\cover_{A_i}
    \\
    \cover_{A_i}
    &\text{otherwise}
  \end{cases}
  \\
  \minglobaledge'_{C,A_i} &:=
  \begin{cases}
    \minglobaledge_{A_i}
      &\text{if }\boundary A_i\subseteq \pi(C)
    \text{ or }\globalcover_{A_i}\leq\cover_{A_i}
    \\
    \minpathedge_{A_i}
      &\text{otherwise}
  \end{cases}
\end{align*}
Note that for a point-cluster $A_i$, $\globalcover_{A_i}$ is always $\leq \cover_{A_i}=l_{\max}$. 

We then have the following relations between the data of the parent and the data of its children:
\begin{align*}
  \cover_C &= \lmax\text{ if }\abs{\boundary C}<2,\text{ otherwise }
  \min\limits_{%
    1\leq i<k,
    \boundary A_i\subseteq \pi(C)
  } \cover_{A_i}
  \\
  \minpathedge_C &= \mathrlap{\nil}\hphantom{\lmax}\text{ if }\abs{\boundary C}<2,\text{ otherwise }
  \minpathedge_{A_j}
    \text{ where }j=\argmin\limits_{%
            1\leq i<k, \boundary A_i\subseteq \pi(C)
                } \cover_{A_i}
  \\
  \globalcover_C &= \min_{1\leq i<k}\globalcover'_{C,A_i}
  \\
  \minglobaledge_C &= \minglobaledge'_{C,A_j}
  \quad\text{where }j=\argmin_{1\leq i<k} \globalcover'_{C,A_i}
  \\
  \cover^-_C &= -1
  \\
  \cover^+_C &= -1
\end{align*}

\paragraph{Analysis}
For any constant-degree top tree, Merge and Split with this information takes constant time, and thus, all operations in the CoverLevel structure in this section take $\Oo(\log n)$ time.  Each cluster uses $\Oo(1)$ space, so the total space used is $\Oo(n)$.

\section{A $\FindSize$ structure}\label{sec:findsize}
We now proceed to show how to extend the CoverLevel structure from Section~\ref{sec:coverlevel} to support FindSize in $\Oo(\log n\log\log n)$ time per Merge and Split.  Later, in Section~\ref{sec:ApproxCount} we will show how to reduce this to $\Oo((\log\log n)^2)$ time per Merge and Split.
See Appendix~\ref{app:findsize} for pseudocode.

We will use the idea of having a single \emph{vertex label} for each vertex, which is a point cluster with no edges, having that vertex as boundary vertex and containing all relevant information about the vertex. The advantage of this is that it simplifies handling of the common boundary vertex during a merge by making sure it is uniquely assigned to (and accounted for by) one of the children.

Let $C$ be a cluster in $T$, let $v$ be a 
vertex in $\pi(C)$, and let $0\leq i<\lmax$. Define
\begin{align*}
  \pointset_{C,v,i} &:= \set{ u\in C \cond
      \begin{gathered}
        \pi(C)\cap u\cdots v=\set{v} \\
        \:\wedge\: \CoverLevel(u,v)\geq i
      \end{gathered}
    }
\intertext{Intuitively, $\pointset_{C,v,i}$ is the set of vertices in $C$ whose path to $v$ is covered at level $\geq i$ independently of the cover levels on $\pi(C)$. Information (such as the size or the existence of certain marked vertices) about this set stays constant for as long as $C$ exists, no matter what happens with the lazy information in the ancestors to $C$.  In this section we only care about the size}
  \pointsize_{C,v,i} &:= \sizeof{\pointset_{C,v,i}}
\intertext{For convenience, we will combine all the $\Oo(\log n)$ levels together into a single vector\footnotemark}
  \pointsize_{C,v} &:= \left( \pointsize_{C,v,i} \right)_{\set{0\leq i<\lmax}}
\end{align*}%
\footnotetext{All vectors and matrices in this section have indices ranging from $0$ to $\lmax-1$.}%
Then we can define the vector
\begin{align*}
  \size_{C}
  &:=
  \sum_{u\in\pi(C)}\pointsize_{C,u}
\end{align*}
Note that with this definition, if $\boundary C=\set{v}$ then $\pointsize_{C,v}=\size_C$  so even when $v=w$ we have
\begin{align*}
  \FindSize(v,w,i) = \size_{C,i}\qquad\text{where }C = \Expose(v,w)
\end{align*}
So for any cluster $C$, the $\size_C$ vector is what we want to maintain.

The main difficulty turns out be computing the $\size_{C}$ vector for the heterogeneous point clusters.  To help with that we will for each cluster $C$ and boundary vertex $v\in\boundary C$ {break $\pi(C)$ into $\lmax+2$ \emph{parts}.  For each $-1\leq i\leq\lmax$ define
\begin{align*}
  \partpath_{C,v,i} &:= \set{u\in\pi(C)\cond\CoverLevel(u,v)=i}
\end{align*}
Then $\partpath_{C,v,i}$ (if nonempty) is a contiguous subset of the vertices on $\pi(C)$.  Furthermore, $\partpath_{C,v,\lmax}=\set{v}$,  $\boundary C\setminus\set{v}\subseteq\partpath_{C,v,-1}$, and for all $0\leq i<\lmax$ the set $\partpath_{C,v,i}$ lies between the closest edge $e$ to $v$ with $c(e)\leq i$ and the closest edge $e'$ to $v$ with $c(e')<i$.  In addition to the $\size_C$ vector, we will
}%
maintain the following two size vectors for each part:
\begin{align*}
  \partsize_{C,v,i}
  &:=
  \sum_{\mathclap{u\in\partpath_{\mathrlap{C,v,i}}}} \pointsize_{C,u}
  &
  \diagsize_{C,v,i}
  &:=
  M\!(i)\cdot\partsize_{C,v,i}
\end{align*}
Where $M\!(i)$ is a diagonal matrix whose entries are defined 
by\footnote{Here, \mbox{$[P]=\begin{cases}1&\text{if }P\text{ is true}\\0&\text{otherwise}\end{cases}$} is the \emph{Iverson Bracket} (see~\cite{10.2307/2325085}).}
\begin{align*}
  M\!(i)_{jj} &= [j\leq i]
\end{align*}
The $M\!(i)$ matrix is purely a notational convenience whose purpose is to ``zero out'' some elements in a vector. In particular, for $0\leq j<\lmax$
\begin{align*}
  \diagsize_{C,v,i,j} =
  (M\!(i)\cdot \partsize_{C,v,i})_j =
    \begin{cases}
      \partsize_{C,v,i,j}&\text{if $j\leq i$}
      \\
      0&\text{otherwise}
    \end{cases}
\end{align*}
Note that these vectors do not take $\cover^-_C$ and $\cover^+_C$ (as defined in Section~\ref{sec:coverlevel}) into account.  The corresponding  ``clean'' vectors are not explicitly stored, but computed when needed as follows
\begin{align*}
  \left.
  \begin{aligned}
    \partsize'_{C,v,i}
    &=
    \begin{cases}
      \partsize_{C,v,i}
      &\text{if $i>\ell$}
      \\
      \mathrlap{
        \sum_{j=-1}^{\ell}\partsize_{C,v,j}
      }
      \hphantom{M\!(i)\cdot\sum_{j=-1}^{\ell}\partsize_{C,v,j}}
      &\text{if $i=\cover^+_C$}
      \\
      \vec{0}
      &\text{otherwise}
    \end{cases}
    \\
    \diagsize'_{C,v,i}
    &=
    \begin{cases}
      \diagsize_{C,v,i}
      &\text{if $i>\ell$}
      \\
      M\!(i)\cdot\sum_{j=-1}^{\ell}\partsize_{C,v,j}
      &\text{if $i=\cover^+_C$}
      \\
      \vec{0}
      &\text{otherwise}
    \end{cases}
  \end{aligned}
  \right\}
  \text{where $\ell=\max\set{\cover^-_C,\cover^+_C}$}
\end{align*}

The point of these definitions is that each path cluster inherits most
of its $\partsize$ and $\diagsize$ vectors from its children, and we can use this fact to get an $\Oo(\lmax/\log\lmax)=\Oo(\log n/\log\log n)$ speedup compared to~\cite{Holm:2001}.

\paragraph{Merging along a path (the general case)}
Let $A,B$ be clusters that we want to merge into a new cluster $C$, and suppose $\boundary A\cup\boundary B\subseteq\pi(C)$.  This covers both types of homogeneous merges (two point or two path clusters), as well as the heterogeneous merge (one point and one path cluster) where the result is a path cluster.  The only type of merge not covered is the heterogeneous merge resulting in a point cluster, which is handled in the next section.
Let $\boundary A\cap\boundary B=\set{c}$.  If $\abs{\boundary C}=1$, let $a=b=c$, otherwise let $\boundary C=\set{a,b}$ with $a\in\boundary A$, $b\in\boundary B$. Then
\begin{align*}
  \size_{C} &= \size_{A} + \size_{B}
  \\
  \partsize_{C,a,i}
  &=
  \begin{cases}
    \partsize'_{A,a,i}
    &\text{if $i>\cover_{A}$}
    \\
    \mathrlap{
      \partsize'_{A,a,i} + \sum_{j=i}^{\lmax}\partsize'_{B,c,j}
    }
    \hphantom{\diagsize'_{A,a,i} + M\!(i)\cdot\sum_{j=i}^{\lmax}\partsize'_{B,c,j}}
    &\text{if $i=\cover_{A}$}
    \\
    \partsize'_{B,c,i}
    &\text{if $i<\cover_{A}$}
  \end{cases}
  \\
  \diagsize_{C,a,i}
  &=
  \begin{cases}
    \diagsize'_{A,a,i}
    &\text{if $i>\cover_{A}$}
    \\
    \diagsize'_{A,a,i} + M\!(i)\cdot\sum_{j=i}^{\lmax}\partsize'_{B,c,j}
    &\text{if $i=\cover_{A}$}
    \\
    \diagsize'_{B,c,i}
    &\text{if $i<\cover_{A}$}
  \end{cases}
\end{align*}
The formulas for $\partsize_{C,b,i}$ and $\diagsize_{C,b,i}$ are
analoguous. The important thing to note is that if we have already computed and stored the $\partsize'_{A,a,i}$, $\partsize'_{B,c,i}$, $\diagsize'_{A,a,i}$, and $\diagsize'_{B,c,i}$ vectors for all $i$, then the only new value we need to compute is 
for $i=\cover_A$.
The rest can be inherited.

\paragraph{Merging off the path (heterogeneous point clusters)}
Now let $A$ be a path cluster with $\boundary A=\set{a,b}$, let $B$ be a point cluster with $\boundary B=\set{b}$, and suppose we want to merge $A,B$ into a new point cluster $C$ with $\boundary C=\set{a}$. Then
\begin{align*}
  \size_{C} &= \left(\sum_{i=-1}^{\lmax}\diagsize'_{A,a,i}\right) + M\!(\cover_A)\cdot \size_{B}
  \\
  \partsize_{C,a,i} &=
  \begin{cases}
    \size_{C}&\text{if $i=\lmax$}
    \\
    \vec{0}&\text{otherwise}
  \end{cases}
  \\
  \diagsize_{C,a,i} &= \partsize_{C,a,i}
\end{align*}

\paragraph{Analysis}
The advantage of our new approach is that each merge or split is a \emph{constant} number of splits, concatenations, searches, and sums over $\Oo(\lmax)$-length lists of $\lmax$-dimensional vectors.  
By representing each list as an augmented balanced binary search tree (see e.g.~\cite[pp. 471--475]{Knuth:1998}), we can implement each of these operations in $\Oo(\lmax\log\lmax)$ time, and using $\Oo(\lmax)$ space per cluster, as follows.  Let $C$ be a cluster and let $v\in\boundary C$.  The tree has one node for each key $i, -1\leq i\leq\lmax$ such that $\partsize_{C,v,i}$ is nonzero, augmented with the following additional information:
\begin{align*}
  \operatorname{key} &:= i
  \\
  \partsize &:= \partsize_{C,v,i}
  \\
  \diagsize &:= \diagsize_{C,v,i}
  \\
  \partsizesum &:= 
  \sum_{j\text{ descendant of }i}\partsize_{C,v,j}
  \\
  \diagsizesum &:= 
  \sum_{j\text{ descendant of }i}\partsize_{C,v,j}
\end{align*}
Each split, concatenate, search, or sum operation can be implemented such that it touches $\Oo(\log\lmax)$ nodes, and the time for each node update is dominated by the time it takes to add two $\lmax$-dimensional vectors, which is $\Oo(\lmax)$.
The total time for each Cover, Uncover, Link, Cut, or FindSize is therefore $\Oo(\log n\cdot\lmax\cdot\log\lmax)=\Oo((\log n)^2\log\log n)$, and the total space used for the structure is $\Oo(n\cdot\lmax)=\Oo(n\log n)$.

\paragraph{Comparison to previous algorithms}
For any path cluster $C$ and vertex $v\in \boundary C$, let $S_{C,v}$ be the matrix whose $j$th column $0\leq j<\lmax$ is defined by
\begin{align*}
  (S^T_{C,v})_j &:= \sum_{k=j}^{\lmax}\partsize'_{C,v,k}
\end{align*}
Then $S_{C,v}$ is essentially the $\size$ matrix maintained for path clusters in~\cite{Holm:1998,Thorup:2000,Holm:2001}.  Notice that
\begin{align*}
  \diag(S_{C,v}) &= \sum_{k=-1}^{\lmax}\diagsize'_{C,v,k}
\end{align*}
which explains our choice of the ``diag'' prefix.

\section{A $\FindFirstLabel$ structure}\label{sec:findfirstlabel}
We will show how to maintain information that allows us to implement  FindFirstLabel; the function that allows us to inspect the replacement edge candidates at a given level. The implementation uses a ``destructive binary search, with undo'' strategy, similar to the non-local search introduced in~\cite{Alstrup:2005}.

The idea is to maintain enough information in each cluster to determine if there is a result.  Then we can start by using $\Expose(v,w)$, and repeatedly split the root containing the answer until we arrive at the correct label.  After that, we simply undo the splits (using the appropriate merges), and finally undo the $\Expose$.

Just as in the FindSize structure, we will use vertex labels to store all the information pertinent to a vertex.
We store all the added \emph{user labels} for each vertex in the label object for that vertex in the base level of the top tree.  For each level where the vertex has an associated user label, we keep a doubly linked list of those labels, and we keep a singly-linked list of these nonempty lists.  Thus, $\FindFirstLabel(v,w,i)$ boils down to finding the first vertex label that has an associated user label at the right level.  Once we have that vertex label, the desired user label can be found in $\Oo(\lmax)$ time.

Let $C$ be a cluster in $T$, and let $v\in\boundary C$%
. Define bit vectors\footnote{Again, using the Iverson bracket.}
\begin{align*}
  \pointincident_{C,v} &:=
  \left(\left[
    \exists v\in \pointset_{C,v,i}\!:\parbox{5em}{$v$ has labels at level $i$}
  \right]\right)_{\!\!\set{0\leq i<\lmax}}
  \\
  \incident_{C} &:=
\bigvee_{u\in\pi(C)}\pointincident_{C,u}
\end{align*}

Maintaining the $\incident_{C}$ bit vectors, and the corresponding $\partincident_{C,v}$ and $\diagincident_{C,v}$ bit vectors, can be done completely analogous to the way we maintain the $\size$ vectors used for FindSize, with the minor change that we use bitwise OR on bit vectors instead of vector addition.

Updating the vertex label cluster $C$ in the top tree during $\AddLabel(v,l,i)$, or a $\RemoveLabel(l)$ where $v=\vertex(l)$ and $\ell(l)=i$ can be done by first calling detach$(C)$, then updating the linked lists containing the user labels and setting
\begin{align*}
  \incident_{C} &= ([v\text{ has labels at level }j])_{\set{0\leq j<\lmax}}
  \\
  \partincident_{C,v,i} &=
  \begin{cases}
    \incident_{C}&\text{if $i=\lmax$}
    \\
    \vec{0}&\text{otherwise}
  \end{cases}
  \\
  \diagincident_{C,v} &= \partincident_{C}
\end{align*}
and then reattaching $C$.  Finally FindFirstLabel(v,w,i) can be implemented in the way already described, by examining $\pointincident_{C,v,i}$ for each cluster.  Note that even though we don't explicitly maintain it, for any cluster $C$ and any $v\in\boundary C$ we can easily compute
\begin{align*}
  \pointincident_{C,v}
  &=
  \bigvee_{i=-1}^{\lmax}\diagincident'_{C,v,i}
  \\
  &=
  \left(
  \bigvee_{i=\ell+1}^{\lmax}\diagincident_{C,v,i}
  \right)
  + M\!(\cover_C^+)\cdot
  \left(
  \bigvee_{i=-1}^{\ell}\partincident_{C,v,i}
  \right)
  \\
  &\qquad\text{where }\ell:=\max\set{\cover_C^-,\cover_C^+}
\end{align*}

In general, let $A_1,\ldots, A_k$ be the clusters resulting from an expose or split, let $v,w\in\bigcup_{i=1}^k\boundary A_i$ (not necessarily distinct). Then we can define
\begin{align*}
  \FindFirstLabel((A_1,\ldots,A_k); v,w,i)
  &=
  \begin{cases}
    \operatorname{userlabels}_{v_x,i}&\text{if $A_x$ is a vertex label}
    \\
    \FindFirstLabel(\Split(A_x);v_x,w_x,i)&\text{otherwise}
  \end{cases}
  \\
  \text{where }&\text{for }1\leq j\leq k
  \\
  v_j&=\argmin_{u\in\boundary A_j}\dist(v,u)
  \\
  w_j&=\argmax_{u\in\boundary A_j}\dist(v,u)
  \\
  \text{and }&
  \\
  I &= \set{1\leq j \leq k\cond 
    \begin{gathered}
      \CoverLevel(v,v_j)\geq i \\\quad\wedge\quad  \pointincident_{A_j,v_j,i}=1
    \end{gathered}
  }
  \\
  x &= \argmin_{j\in I} \:(3\cdot\dist(v,\meet(v_j,v,w))+\abs{\partial A_j\cap v\cdots w})
  \\
  \FindFirstLabel(v,w,i)
  &=\FindFirstLabel(\Expose(v,w);v,w,i)
\end{align*}
What happens here is that, for each $j$, the vertices $v_j$ and $w_j$ are the boundary vertices of $A_j$ closest to $v$, and farthest from $v$, respectively. Thus, if $A_j$ is a path cluster, $\boundary A_j=\set{v_j,w_j}$, otherwise $\boundary A_j=\set{v_j}=\set{w_j}$. %
The set $I$ defined above is the set of indices of the clusters that contain labels at level $i$. Then, $x$ is picked from $I$ to minimize $D(x)=\dist(v, \meet(v_x,v,w))$.  If there are more than one cluster minimizing $D(x)$, we prefer clusters with at most one boundary vertex on $\pi(C)$, since any vertex $u$ in such a cluster will have $\dist(v, \meet(u,v,w))=D(x)$, which is minimal. A path cluster $A_x$ with both boundary vertices on $\pi(C)$ is only picked if it is the only cluster minimizing $D(x)$.  In either case, we know that $A_x$ contains a vertex $u$ with the desired label, and that any vertex in $A_x$ minimizing $\dist(v_x, \meet(u,v_x,w_x))$ will suffice.  Now if $A_x$ is a vertex label, it has only one vertex, and it stores the desired user label.  Otherwise, we simply split $A_x$ and recurse

\paragraph{Analysis}
By the method described in this section, AddLabel, RemoveLabel, and FindFirstLabel are maintained in $\Oo(\log n\cdot\lmax\cdot\log\lmax)=\Oo((\log n)^2\log\log n)$ worst-case time.

This can be reduced to $\Oo(\log n\cdot\log\lmax)=\Oo(\log n\log\log n)$ by realizing that each $\lmax$-dimensional bit vector fits into $\Oo(1)$ words, and that each bitwise OR therefore only takes constant time.

The total space used for a FindFirstLabel structure with $n$ vertices and $m$ labels is $\Oo(m+n)$ plus the space for $\Oo(n)$ bit vectors. If we assume a word size of $\Omega(\log n)$, this is just $\Oo(m+n)$ in total. If we disallow bit packing tricks, we may have to use $\Oo(m+n\cdot\lmax)=\Oo(m+n\log n)$ space.

\section{Approximate counting}\label{sec:ApproxCount}
As noted in~\cite{Thorup:2000}, we don't need to use the exact
component sizes at each level.  If $s$ is the actual correct size, it
is sufficient to store an approximate value $s'$ such that $s'\leq s\leq e^{\epsilon}s'$, for some constant $0<\epsilon<\ln2$.  Then we are no longer guaranteed that component sizes drop by a factor of $\frac{1}{2}$ at each level, but rather get a factor of $\frac{e^{\epsilon}}{2}$. This increases the number of levels to $\lmax=\floor{\ln n/(\ln2-\epsilon)}$ (which is still $\Oo(\log n)$), but leaves the algorithm otherwise unchanged.  Suppose we represent each size as a floating point value with a $b$-bit mantissa, for some $b$ to be determined later.  For each addition of such numbers the relative error increases.  The relative error at the root of a tree of additions of height $h$ is $(1+2^{-b})^h\leq e^{2^{-b}h}$, thus to get the required precision it is sufficient to set $b=\log_2\frac{h}{\epsilon}$.  In our algorithm(s) the depth of calculation is clearly upper bounded by $h\leq h(n)\cdot\lmax$, where $h(n)=\Oo(\log n)$ is the height of the top tree.  It follows that some $b\in\Oo(\log\log n)$ is sufficient.  Since the maximum size of a component is $n$, the exponent has size at most $\ceil{\log_2n}$, and can be represented in $\ceil{\log_2\ceil{\log_2 n}}$ bits.  Thus storing the sizes as $\Oo(\log\log n)$ bit floating point values is sufficient to get the required precision.
Assuming a word size of $\Omega(\log n)$ this lets us store $\Oo\!\big(\frac{\log n}{\log \log n}\big)$ sizes in a single word, and to add them in parallel in constant time.

\paragraph{Analysis}
We will show how this applies to our FindSize structure from Section~\ref{sec:findsize}.
The bottlenecks in the algorithm all have to do with operations on $\lmax$-dimensional size vectors.  In particular, the amortized update time is dominated by the time to do $\Oo(\log n\cdot\log\lmax)$ vector additions, and $\Oo(\log n)$ multiplications of a vector by the $M\!(i)$ matrix.  With approximate counting, the vector additions each take $\Oo(\log\log n)$ time.  Multiplying a size vector $x$ by $M\!(i)$ we get:
\begin{align*}
  (M\!(i)\cdot x)_j &=
  \begin{cases}
    x_j&\text{if $j\leq i$}
    \\
    0&\text{otherwise}
  \end{cases}
\end{align*}
And clearly this operation can also be done on $\Oo\!\big(\frac{\log n}{\log \log n}\big)$ sizes in parallel when they are packed into a single word.  With approximate counting, each multiplication by $M\!(i)$ therefore also takes $\Oo(\log\log n)$ time.  Thus the time per operation is reduced to $\Oo(\log n(\log\log n)^2)$.

The space consumption of the data structure is $\Oo(n)$ plus the space needed to store $\Oo(n)$ of the $\lmax$-dimensional size vectors.  With approximate counting that drops to $\Oo(\log\log n)$ per vector, or $\Oo(n\log\log n)$ in total.

\paragraph{Comparison to previous algorithms}
Combining the modified FindSize structure with the CoverLevel structure from Section~\ref{sec:coverlevel} and the FindFirstLabel structure from Section~\ref{sec:findfirstlabel} gives us the first bridge-finding structure with $\Oo((\log n)^2(\log\log n)^2)$ amortized update time.  This structure uses $\Oo(m+n\log\log n)$ space, and uses $\Oo(\log n)$ time for FindBridge and Size queries, and $\Oo(\log n(\log\log n)^2)$ for $2$-size queries.

For comparison, applying this trick in the obvious way to the basic $\Oo((\log n)^4)$ time and $\Oo(m+n(\log n)^2)$ space algorithm from~\cite{Holm:1998, Holm:2001} gives the $\Oo((\log n)^3\log n)$ time and $\Oo(m+n\log n\log\log n)$ space algorithm briefly mentioned in~\cite{Thorup:2000}.

\section{Top trees revisited}\label{sec:toprev}
We can combine the tree data structures presented so far to build a data structure for bridge-finding that has update time $\Oo((\log n)^2(\log\log n)^2)$, query time $\Oo(\log n)$, and uses $\Oo(m + n\log\log n)$ space.

In order to get faster queries and linear space, we need to use top-trees in an even smarter way. For this, we need the full generality of the top trees described in~\cite{Alstrup:2005}.

\subsection{Level-based top trees, labels, and fat-bottomed trees}

As described in~\cite{Alstrup:2005}, we may associate a level with each cluster, such that the leaves of the top tree have level $0$, and such that the parent of a level $i$ cluster is on level $i+1$. 
As observed in Alstrup et al.~\cite[Theorem 5.1]{Alstrup:2005}, one may also associate one or more \emph{labels} with each vertex. For any vertex, $v$, we may handle the label(s) of $v$ as point clusters with $v$ as their boundary vertex and no edges.
Furthermore, as described in~\cite{Alstrup:2005}, we need not have single edges on the bottom most level. We may generalize this to instead have clusters of \emph{size} $\le Q$, that is, with at most $Q$ edges, as the leaves of the top tree.

\begin{theorem}[Alstrup, Holm, de Lichtenberg, Thorup~\cite{Alstrup:2005}]\label{thm:toptrees-full}
Consider a fully dynamic forest and let $Q$ be a positive integer
parameter. For the trees in the forest, we can maintain levelled top 
trees whose base clusters are of size at most $Q$ and such that if a tree 
has size s, it has height $h = \Oo(\log s)$ and $\ceil{\Oo(s/(Q(1 + 
    \varepsilon)^i))}$ clusters on level $i \le h$. Here, 
$\varepsilon$ is a positive constant. Each link, 
cut, attach, detach, or expose operation is supported with $\Oo(1)$
creates and destroys, and $\Oo(1)$ joins and splits on each positive level.
If the involved trees have total size $s$, this involves $\Oo(\log s)$ top 
tree modifications, all of which are identified in $\Oo(Q + \log s)$ time.
For a composite sequence of $k$ updates, each of the above bounds are 
multiplied by $k$. As a variant, if we have parameter $S$ bounding the size 
of each underlying tree, then we can choose to let all top roots be on the
same level $H = \Oo(\log S)$.
\end{theorem}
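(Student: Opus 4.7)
The plan is to build directly on the standard leveled binary top tree construction of Alstrup et al., modifying only the base level to accommodate fat base clusters of size at most $Q$. The first step is to establish the level-$0$ decomposition: I would greedily partition each underlying tree into connected, edge-disjoint pieces of size between roughly $Q/2$ and $Q$, each with at most two boundary vertices. A standard postorder traversal that accumulates edges into a current piece until it either reaches $Q$ edges or finishes a subtree suffices. This yields at most $\Oo(s/Q)$ base clusters, each serving as a leaf of the top tree, and accounts for the $\lceil \Oo(s/Q)\rceil$ term at level $i=0$.

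For the positive levels I would invoke the level-based merging strategy from~\cite{Alstrup:2005}: on level $i+1$, greedily pair clusters from level $i$ that share a boundary vertex and whose union is still a valid cluster (connected, at most two boundary vertices), promoting unpaired clusters unchanged. The central technical lemma to carry over is that this greedy pairing shrinks the cluster count on each level by a constant factor $(1+\varepsilon)$ for some fixed $\varepsilon > 0$; this immediately yields $\lceil\Oo(s/(Q(1+\varepsilon)^i))\rceil$ clusters on level $i$ and thus height $h = \Oo(\log(s/Q)) = \Oo(\log s)$. For the same-level variant with bound $S$, I would pad each top root upward with trivial wrapper clusters (a parent with a single child) until every root sits on level $H = \Oo(\log S)$; these contribute $\Oo(1)$ additional creates and destroys per level and fit within the existing bounds.

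For the updates, I would reuse the Alstrup et al. amortization scheme: each of link, cut, attach, detach, and expose perturbs the top tree in only $\Oo(1)$ places per level, which are repaired by local joins and splits that propagate upward one level at a time. The main obstacle is showing that at each positive level only $\Oo(1)$ creates, destroys, joins, and splits are needed, and that these modifications can be identified quickly. This is precisely the content of the balancing argument from~\cite{Alstrup:2005}, which shows that after a local perturbation the greedy pairing invariant can be restored by touching only a constant number of neighboring clusters on each level; summing over the $\Oo(\log s)$ levels gives $\Oo(\log s)$ total modifications. At the bottom level, identifying which edges move between base clusters, and rebuilding a base cluster that has grown or shrunk outside the $[Q/2,Q]$ range, costs $\Oo(Q)$ extra time since each base cluster has $\Oo(Q)$ edges; all higher levels contribute $\Oo(1)$ identification time each, for a total of $\Oo(Q + \log s)$.

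Finally, the composite claim for $k$ combined updates follows by linearity of the per-operation bounds: applying $k$ updates in succession multiplies each of the resource counts (creates, destroys, joins, splits, identification time) by $k$. As no new combinatorial ingredient beyond the construction and amortization of~\cite{Alstrup:2005} is needed, the proof is essentially a restatement of their result, adapted to expose the parameter $Q$ explicitly at the base level and the parameter $S$ for the same-level variant.
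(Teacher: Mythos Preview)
The paper does not prove this theorem; it is stated as a direct citation of a result from Alstrup, Holm, de Lichtenberg, and Thorup~\cite{Alstrup:2005}, with no proof or proof sketch supplied. Your proposal is therefore not comparable to anything in the paper itself. What you have written is a reasonable high-level outline of how the cited construction works (greedy fat base clusters, level-by-level pairing with constant-factor shrinkage, local repair per level), and it correctly identifies that no new ideas beyond~\cite{Alstrup:2005} are required; but strictly speaking the ``paper's own proof'' here is simply the citation.
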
 

\subsection{High degree top trees}
Top trees of degree two are well described and often used. However, it turns out to be useful to also consider top trees of higher degree $B$, especially for $B\in\omega(1)$. 
\begin{lemma}\label{lem:highdeg}
    Given any $Q\ge 1$ and $B\ge 2$, one can maintain top trees of degree $B$ and height
    $\Oo(\log n/\log B)$ with base clusters of size at most $Q$. Each expose, link, or cut is handled by $\Oo(1)$ calls 
    to create or destroy and $\Oo(\log n/\log B)$ calls to split or merge. 
    The operations are identified in $\Oo(B(\log n/\log B)+Q)$ time.
\end{lemma}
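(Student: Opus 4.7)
The plan is to obtain the $B$-ary top tree as a coarse-grained view of a binary top tree supplied by Theorem~\ref{thm:toptrees-full}. Setting $k = \lfloor \log_2 B\rfloor$, I would maintain a levelled binary top tree with base cluster size $Q$; this tree has height $h = \Oo(\log n)$ and the cluster counts per level given by Theorem~\ref{thm:toptrees-full}. The $B$-ary top tree is then defined by keeping only the binary clusters whose level is a multiple of $k$ (level $0$, the base level, is included). A $B$-ary node at level $j \geq 1$ has as its children the selected clusters $k$ binary levels below that are contained in it. Since only $k$ binary levels separate adjacent $B$-ary levels, each $B$-ary node has at most $2^{k} \leq B$ children, the base clusters of the $B$-ary tree coincide with those of the binary tree (so still of size at most $Q$), and the $B$-ary height is $\lceil h/k \rceil = \Oo(\log n/\log B)$.

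To derive the operation counts, I would lift the binary bounds through the grouping. Theorem~\ref{thm:toptrees-full} guarantees that every expose, link, or cut performs $\Oo(1)$ creates/destroys, all located on the base level, and $\Oo(1)$ binary joins/splits on each positive binary level. Within the $k$ consecutive binary levels lying strictly between two adjacent $B$-ary levels, therefore, only $\Oo(k)$ binary joins/splits take place, and these can be packaged into $\Oo(1)$ $B$-ary joins/splits at the $B$-ary level immediately above (since each affected $B$-ary cluster is simply rebuilt from its current children). Summing over the $\Oo(\log n/\log B)$ groups of levels gives the claimed $\Oo(\log n/\log B)$ $B$-ary joins/splits together with $\Oo(1)$ creates/destroys on the base level.

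For the identification time, the $\Oo(Q)$ term reflects the base-level activity exactly as in the binary case. Each of the $\Oo(\log n/\log B)$ $B$-ary joins or splits must additionally enumerate its (up to $B$) children, which takes $\Oo(B)$ time per $B$-ary operation and $\Oo(B(\log n/\log B))$ time in total, yielding the overall bound $\Oo(B(\log n/\log B) + Q)$.

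The main obstacle will be arguing that the binary operations really do package cleanly into $B$-ary ones: a binary modification near a selected-level boundary could in principle cascade in a way that is awkward to charge to a single $B$-ary operation. To handle this I would exploit the explicit structural guarantee from Theorem~\ref{thm:toptrees-full} that only $\Oo(1)$ binary modifications occur on each level, which bounds the number of $B$-ary clusters touched per group by a constant and lets us afford to enumerate their (at most $B$) children and rebuild them from scratch. No additional invariants beyond those already maintained by the binary top tree are needed, so correctness of expose, link, and cut on the $B$-ary tree follows directly from the corresponding correctness on the binary tree.
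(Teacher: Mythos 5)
Your proposal is correct and follows essentially the same route as the paper: both construct the $B$-ary top tree by retaining only the clusters of a levelled binary top tree (from Theorem~\ref{thm:toptrees-full}) whose level is a multiple of $\floor{\log_2 B}$, implement link/cut/expose by running them on the underlying binary tree, and charge the $\Oo(B)$ per-merge cost of touching up to $B$ children to get the $\Oo(B(\log n/\log B)+Q)$ identification time. The only cosmetic difference is in the accounting of which binary splits/merges survive as $B$-ary ones (you count $\Oo(1)$ per binary level and group; the paper counts clusters on a constant number of root paths), which amounts to the same bound.
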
 
\begin{proof}
    Given a binary levelled top tree $\mathcal{T}_2$ of height $h$ with base clusters of size at most $Q$ as in Theorem~\ref{thm:toptrees-full}, we can create a $B$-ary levelled top tree $\mathcal{T}_B$, where the leaves of $\mathcal{T}_B$ are the leaves of $\mathcal{T}_2$, and where the clusters on level $i$ of $\mathcal{T}_B$ are the clusters on level $i\cdot \floor{\log_2 B}$ of $\mathcal{T}_2$. Edges in $\mathcal{T}_B$ correspond to paths of length $\floor{\log_2 B}$ in $\mathcal{T}_2$. Thus, given a binary top tree, we may create a $B$-ary top tree bottom-up in linear time.
 
    We may implement link, cut and expose by running the corresponding operation in $\mathcal{T}_2$. Each cut, link or expose operation will affect clusters on a constant number of root-paths in $\mathcal{T}_2$. There are thus only $\Oo(\log n/\log B)$ calls to split or merge of a cluster on a level divisible by $\floor{\log _2 B}$. Thus, since each split or merge in $\mathcal{T}_B$ corresponds to a split or merge of a cluster in $\mathcal{T}_2$ whose level is divisible by $\floor{\log _2 B}$, we have only $\Oo(\log n /\log B)$ calls to split and merge in $\mathcal{T}_B$.
    
    However, since there are $\Oo(B)$ clusters whose parent pointers need to be updated after a merge, the total running time becomes $\Oo(B(\log n/\log B)+Q)$.
\end{proof}

\subsection{Saving space with fat-bottomed top trees}
In this section we present a general technique for reducing the space usage of a top tree based data structure to linear.  For convenience, we will call any $b(n)$-ary top tree data structure that can be implemented using the top trees from Lemma~\ref{lem:highdeg} \emph{well-behaved}.  Loosely speaking, any well-behaved top tree data structure can be modified to use linear space.

The properties of the technique are captured in the following:
\begin{lemma}\label{lem:savespace}
	Suppose we have a well-behaved $b(n)$-ary top tree data structure, that uses $s(n)$ space per cluster, and spends $t(n)$ worst-case time per merge or split.
	Suppose further that there exists an algorithm that takes any subgraph of size $q$ that forms a cluster, say, $C$, and calculates the complete information for $C$ in $t_0(q,n)$ time, and suppose that the complete information for $C$ has size at most $s_0(q,n)$.
	Finally, suppose that there exists a function $q$ of $n$ such that $s(n)<s_0(q(n),n)\in\Oo(q(n))$.
	
	Then, there exists a data structure maintaining the same information in top trees of height $h=O(\log n /\log b(n))$, such that the top trees use linear space in total, and have $\Oo(t(n)\cdot h(n) + t_0(q(n),n))$ update time for link, cut, and expose.
\end{lemma}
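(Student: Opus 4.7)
The plan is to take the given well-behaved $b(n)$-ary top tree and replace its single-edge base clusters by ``fat'' base clusters of size $Q := q(n)$, exactly in the way permitted by Lemma~\ref{lem:highdeg} (with parameter $Q = q(n)$). For every internal (positive-level) cluster we will store its augmented information as in the original data structure. For base clusters, by contrast, we will \emph{not} store the augmented information at all: we will store only the raw edges and vertex labels that they contain, and whenever a base cluster must participate in a merge or split at its parent we will recompute its complete augmented information from scratch using the given $t_0(q,n)$-time algorithm.

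For the space bound, I would first observe that by Lemma~\ref{lem:highdeg} the top tree has height $h = \Oo(\log n / \log b(n))$ and $\Oo(n/Q)$ base clusters, and hence $\Oo(n/Q)$ internal clusters as well (since the $b(n)$-ary top tree is balanced). Each internal cluster costs $\Oo(s(n))$ space, and by hypothesis $s(n) < s_0(q(n),n) \in \Oo(q(n)) = \Oo(Q)$, so the total internal-cluster space is $\Oo((n/Q) \cdot s(n)) = \Oo(n)$. The raw edges and vertex labels stored at base clusters sum to $\Oo(n)$ trivially, giving linear total space.

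For the time bound, I would use Lemma~\ref{lem:highdeg}: each link, cut or expose is handled with $\Oo(1)$ create/destroy operations and $\Oo(h)$ merges or splits at positive levels, and only a constant number of base clusters are structurally affected. Each internal merge or split costs $\Oo(t(n))$, because both children (being internal) already carry their augmented information; these contribute $\Oo(h \cdot t(n))$ in total. Whenever a merge or split involves a base cluster we first spend $\Oo(t_0(q(n),n))$ time to compute its augmented information and then discard it afterwards; since only $\Oo(1)$ base clusters are touched per update, this adds up to $\Oo(t_0(q(n),n))$. Together this gives the claimed $\Oo(t(n) \cdot h(n) + t_0(q(n),n))$ update time.

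The main obstacle I expect is a careful, local correctness argument: because base-cluster information is never stored persistently, we must ensure that every time a base cluster appears as a child in an internal merge its augmented information is actually available at that moment. This is handled by the order of operations in the top-tree maintenance — all merges/splits of a given update are first \emph{identified} (in $\Oo(b(n) h + Q)$ time by Lemma~\ref{lem:highdeg}) and then executed, so we can prepend to the execution of each affected merge a single $t_0(q(n),n)$-time recomputation for each base-cluster child that appears. Once this is in place, correctness of all internal augmented values follows from correctness of the original top-tree data structure restricted to the unchanged subtrees, and the amortization of the recomputations matches the cost already budgeted above.
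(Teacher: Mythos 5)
Your overall plan --- instantiate Lemma~\ref{lem:highdeg} with $Q=q(n)$ and $B=b(n)$, charge the $\Oo(h(n))$ merges/splits at $t(n)$ each and the $\Oo(1)$ base-cluster computations at $t_0(q(n),n)$ each, and bound the space by (number of clusters)$\times$(size per cluster)$=\Oo(n/q(n))\cdot\Oo(q(n))$ --- is the paper's argument. But you deviate in one substantive way: you decline to store the augmented information at base clusters and instead recompute it every time a base cluster participates in a merge. This is both unnecessary and, for general $b(n)$, breaks the time bound. The hypothesis $s_0(q(n),n)\in\Oo(q(n))$ is there precisely so that the complete information of a fat base cluster can be stored alongside its $q(n)$ edges within the linear space budget; the $t_0$ algorithm then only needs to run at \emph{create}, which happens $\Oo(1)$ times per update.

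The concrete gap in your version is the claim that ``only $\Oo(1)$ base clusters are touched per update.'' Only $\Oo(1)$ base clusters are \emph{created or destroyed} per update, but a single re-executed level-$1$ merge has up to $b(n)$ base clusters as children, and the merge needs the augmented information of \emph{all} of them, not just the newly created one. If that information is not stored, you must recompute it for each such child, costing $\Oo(b(n)\cdot t_0(q(n),n))$ per update rather than the claimed $\Oo(t_0(q(n),n))$. This is harmless when $b(n)=\Oo(1)$ but not in the regime $b(n)=\omega(1)$ that the lemma (and Section~\ref{sec:fastQuery}, with $b(n)=(\log n)^{\epsilon}$) is designed for. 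The fix is simply to store $s_0(q(n),n)=\Oo(q(n))$ words with each base cluster, as the paper does; your space accounting already shows this is affordable. The rest of your argument (the cluster count from Lemma~\ref{lem:highdeg}, the $\Oo(h\cdot t(n))$ charge for internal merges, and the observation that identification and execution are separated) matches the paper's proof.
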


\begin{proof}
  This follows directly from Lemma~\ref{lem:highdeg} by setting $Q=q(n)$ and $B=b(n)$.  Then the top tree will have $\Oo(n / q(n))$ clusters of size at most $s_0(q(n),n)=\Oo(q(n))$ so the total size is linear.  The time per update follows because the top tree uses $\Oo(h(n))$ merges or splits and $\Oo(1)$ create and destroy per link cut and expose.  These take $t(n)$ and $t_0(q(n),n)$ time respectively.
\end{proof}

\section{A faster $\CoverLevel$ structure}\label{sec:fastQuery}
If we allow ourselves to use bit tricks, we can improve the
$\CoverLevel$ data structure from Section~\ref{sec:coverlevel}.
The main idea is, for some $0<\epsilon<1$, to use top trees of degree $b(n)=(\log n)^\epsilon\in\Oo(w/\log\lmax)$. As noted in Lemma~\ref{lem:highdeg}, such
top trees have height $h(n)\in\Oo(\frac{\log n}{\epsilon\log\log n})$, and finding the sequence of merges and splits for a given link, cut or expose takes $\Oo(b(n)\cdot h(n))\in\Oo(\frac{(\log n)^{1+\epsilon}}{\epsilon\log\log n})\subseteq\oo((\log n)^{1+\epsilon})$ time.

The high-level algorithm makes at most a constant number of calls to link and cut for each insert or delete, so we are fine with the time for these operations.  However, we can no longer use $\Expose$ to implement $\Cover$, $\Uncover$, $\CoverLevel$ and $\MinCoveredEdge$, as that would take too long.

In this section, we will show how to overcome this limitation by working directly with the underlying tree.

\paragraph{The data}
The basic idea is to have each parent cluster store a \emph{buffer} for each of its children, containing all the $\cover$, $\cover^-$, $\cover^+$ and $\globalcover$ values.
Since the degree is $\Oo(w/\log\lmax)$, and each value uses at most $\Oo(\log\lmax)$ bits, these fit into a constant number of words, and so we can use standard bit tricks\footnote{See e.g.~\cite{DBLP:journals/jcss/FredmanW93} or~\cite{DBLP:journals/iandc/AlbersH97}.} to operate on the buffers for all children of a node in parallel. We will show how to implement $\Cover$, $\Uncover$, $\CoverLevel$, and $\MinCoveredEdge$, such that each of them only touches $\Oo(h(n))$ nodes and the buffers stored in those nodes.

Let $C$ be a cluster with children $A_1,\ldots,A_k$. Since $k\leq w/\log\lmax$, we can define the following vectors that each fit into a constant number of words.
\begin{align*}
  \packedcover_C &:= (\cover_{A_i})_{\set{1\leq i\leq k}}
  \\
  \packedcover^-_C &:= (\cover^-_{A_i})_{\set{1\leq i\leq k}}
  \\
  \packedcover^+_C &:= (\cover^+_{A_i})_{\set{1\leq i\leq k}}
  \\
  \packedglobalcover_C &:= (\globalcover_{A_i})_{\set{1\leq i\leq k}}
\end{align*}

The description of $\Split$ and $\Merge$ from Section~\ref{sec:coverlevel} still apply, if we think of the ``packed'' values as a separate layer of degree $1$ clusters between each pair of ``real'' clusters.

For concreteness, let $C$ be a cluster with children $A_1,\ldots,A_k$, and define operations
\begin{itemize}
\item $\CleanToBuffer(C)$. For each $1\leq i\leq k$: If $A_i$ is a path child of $C$ and\\ \mbox{$\max\set{\smash[b]{\packedcover_{C,i},\packedcover^-_{C,i}}}\leq\cover^-_C$}, set:
  \begin{align*}
    \packedcover^-_{C,i} &= \cover^-_C
\intertext{Then if $\packedcover_{C,i}\leq\max\set{\cover^-_C,\cover^+_C}$ set}
    \packedcover_{C,i} &= \cover^+_C
    \\
    \packedcover^+_{C,i} &= \cover^+_C
  \end{align*}
  After updating all $k$ children, set $\cover^-_C=\cover^+_C=-1$.
  Note that this can be done in parallel for all $1\leq i\leq k$ in
  constant time using bit tricks.
\item $\CleanToChild(C,i)$. If $A_i$ is a path child of $C$ and $\max\set{\smash[b]{\cover_{A_i},\cover^-_{A_i}}}\leq\packedcover^-_{C,i}$, set
  \begin{align*}
    \cover^-_{A_i} &= \packedcover^-_{C,i}
\intertext{Then if $\cover_{A_i}\leq\max\set{\smash[b]{\packedcover^-_{C,i},\packedcover^+_{C,i}}}$ set}
    \cover_{A_i} &= \packedcover^+_{C,i}
    \\
    \cover^+_{A_i} &= \packedcover^+_{C,i}
  \end{align*}
Finally set $\packedcover^-_{C,i}=\packedcover^+_{C,i}=-1$. Again, note that this takes constant time.
\item $\ComputeFromChild(C,i)$. Set
  \begin{align*}
    \packedcover_{C,i} &= \cover_{A_i}
    \\
    \packedcover^-_{C,i} &= -1
    \\
    \packedcover^+_{C,i} &= -1
    \\
    \packedglobalcover_{C,i} &= \globalcover_{A_i}
  \end{align*}

\item $\ComputeFromBuffer(C)$. For $1\leq i\leq k$ define
\begin{align*}
  \packedglobalcover'_{C,i}
  &=
  \begin{cases}
    \packedglobalcover_{C,i}
    &\text{if }\boundary A_i\subseteq \pi(C)
    \\
    &\text{or }\packedglobalcover_{C,i}\leq\packedcover_{C,i}
    \\
    \packedcover_{C,i}
    &\text{otherwise}
  \end{cases}
  \\
  \minglobaledge'_{C,i}
  &=
  \begin{cases}
    \mathrlap{\minglobaledge_{A_i}}
    \hphantom{\packedglobalcover_{C,i}}
      &\text{if }\boundary A_i\subseteq \pi(C)
      \\
      &\text{or }\globalcover_{A_i}\leq\cover_{A_i}
    \\
    \minpathedge_{A_i}
      &\text{otherwise}
  \end{cases}
\end{align*}
We can then compute the data for $C$ from the buffer as follows:
\begin{align*}
  \cover_C &=
  \begin{cases}
    \mathrlap{
    \min\limits_{\substack{
        1\leq i<k
        \\
        \boundary A_i\subseteq \pi(C)
    }} \packedcover_{C,i}
    }
    \hphantom{
    \;\text{where }j=\argmin\limits_{\substack{
      1\leq i<k
      \\
      \boundary A_i\subseteq \pi(C)
    }} \packedcover_{C,i}
    }
    &\text{if $\abs{\boundary C}=2$}
    \\
    \lmax&\text{otherwise}
  \end{cases}
  \\
  \minpathedge_C &=
  \begin{cases}
    \minpathedge_{A_j}
    &\text{if $\abs{\boundary C}=2$}
    \\
    \;\text{where }j=\argmin\limits_{\substack{
      1\leq i<k
      \\
      \boundary A_i\subseteq \pi(C)
    }} \packedcover_{C,i}
    \\
    \nil&\text{otherwise}
  \end{cases}
  \\
  \globalcover_C &= \min_{1\leq i<k}\packedglobalcover'_{C,i}
  \\
  \minglobaledge_C &= \minglobaledge'_{C,j}
  \\&\text{where }j=\argmin_{1\leq i<k} \packedglobalcover'_{C,i}
  \\
  \cover^-_C &= -1
  \\
  \cover^+_C &= -1
\end{align*}
This can be computed in constant time, because $(\packedglobalcover'_{C,i})_{\set{1\leq i\leq k}}$ fits into a constant number of words that can be computed in constant time using bit tricks, and thus each ``$\min$'' or ``$\argmin$'' is taken over values packed into a constant number of words.
\end{itemize}
Then $\Split(C)$ can be implemented by first calling $\CleanToBuffer(C)$, and then for each $1\leq i\leq k$ calling $\CleanToChild(C,i)$.  This ensures that all the lazy cover information is propagated down correctly. Similarly, $\Merge(C; A_1,\ldots,A_k)$ can be implemented by first calling $\ComputeFromChild(C,i)$ for each $1\leq i\leq k$, and then calling $\ComputeFromBuffer(C)$.  Thus $\Split$ and $\Merge$ each take $\Oo(b(n))$ time.

\paragraph{Computing $\CoverLevel(v)$ and $\MinCoveredEdge(v)$}
With the data described in the previous section, we can now answer the ``global'' queries as follows
\begin{align*}
  \CoverLevel(v) &= \globalcover_C
  \\
  \MinCoveredEdge(v) &= \minglobaledge_C
  \\
  &\hspace{-1cm}\text{where $C$ is the point cluster returned by $\root(v)$}
\end{align*}
Note that, for simplicity, we assume the top tree always has a single vertex exposed. This can easily be arranged by a constant number of calls to $\Expose$ after each link or cut, without affecting the asymptotic running time.  Computing $\CoverLevel(v)$ or $\MinCoveredEdge(v)$ therefore takes $\Oo(h(n))$ worst case time.

\paragraph{Computing $\CoverLevel(v,w)$ and $\MinCoveredEdge(v,w)$}
Since we can no longer use $\Expose$ to implement $\Cover$ and $\Uncover$, we need a little more machinery.

What saves us is that all the information we need to find $\CoverLevel(v,w)$ is stored in the $\Oo(h(n))$ clusters that have $v$ or $w$ as internal vertices, and that once we have that, we can find a single child $X$ of one of these clusters such that $\MinCoveredEdge(v,w)=\minpathedge_X$.

Before we get there, we have to deal with the complication of $\cover^-$ and $\cover^+$.  Fortunately, all we need to do is make $\Oo(h(n))$ calls to $\CleanToBuffer$ and $\CleanToChild$, starting from the root and going down towards $v$ and $w$.  Since each of these calls take constant time, we use only $\Oo(h(n))$ time on cleaning.

Now%
, the path $v\cdots w$ consists of $\Oo(h(n))$ edge-disjoint fragments, such that:
\begin{itemize}
\item Each fragment $f$ is associated with, and contained in, a single cluster $C_f$ whose parent has $v$ or $w$ as an internal vertex.
\item For each fragment $f$, the endpoints are either in $\set{v,w}$ (and then $C_f$ is a base cluster) or are boundary vertices of children of $C_f$.
\end{itemize}

We can find the fragments in $\Oo(h(n))$ time, and for each fragment $f$, we can in constant time find its cover level by examining $\packedcover_{C_f}$.

Let $f_1,\ldots,f_k$ be the fragments of the path, and for $1\leq i\leq k$ let $v_i,w_i$ be the endpoints of the fragment closest to $v,w$ respectively. Then\footnote{Recall that a \emph{path child} of $C$ is defined as a child that contains at least one edge of $\pi(C)$.}
\begin{align*}
  \CoverLevel(v,w) &= \min_{1\leq i\leq k}\CoverLevel(v_i,w_i)
  \\
  \MinCoveredEdge(v,w) &= \MinCoveredEdge(v_j,w_j)
  \\
  &\text{where }j=\argmin_{1\leq i\leq k}\CoverLevel(v_i,w_i)
  \\
  \MinCoveredEdge(v_j,w_j) &= \minpathedge_X
  \\
  &\text{where }X=\argmin_{Y\text{ path child of }C_{f_j}}\cover_Y
\end{align*}

So computing $\CoverLevel(v,w)$ or $\MinCoveredEdge(v,w)$ takes $\Oo(h(n))$ worst case time.

\paragraph{Cover and Uncover}
We are now ready to handle $\Cover(v,w,i)$ and $\Uncover(v,w,i)$.  First we make $\Oo(h(n))$ calls to $\CleanToBuffer$ and $\CleanToChild$. Then let $f_1,\ldots,f_k$ be the fragments of the $v\cdots w$ path, and for $1\leq i\leq k$ let $v_i,w_i$ be the endpoints of the fragment closest to $v,w$ respectively. Then for each $f\in f_1,\ldots,f_k$, and each path child $A_j$ of $C_f$, $\Cover(v,w,i)$ needs to set
\begin{align*}
  \packedcover_{C_f,j} &= \max\set{\packedcover_{C_f,j}, i}
  \\
  \packedcover^+_{C_f,j} &= \max\set{\packedcover^+_{C_f,j}, i}
\end{align*}
Similarly, for each $f\in f_1,\ldots,f_k$, and for each path child $A_j$ of $C_f$,
if $\packedcover_{C_f,j}\leq i$, $\Uncover(v,w,i)$ needs to set
\begin{align*}
  \packedcover_{C_f,j} &= -1
  \\
  \packedcover^+_{C_f,j} &= -1
  \\
  \packedcover^-_{C_f,j} &= \max\set{\packedcover^-_{C_f,j}, i}
\end{align*}
In each case, we can use bit tricks to make this take constant time per fragment. Finally, we need to update all the $\Oo(h(n))$ ancestors to the clusters we just changed.  We can do this bottom-up using $\Oo(h(n))$ calls to $\ComputeFromChild$ and $\ComputeFromBuffer$.

We conclude that $\Cover(v,w,i)$ and $\Uncover(v,w,i)$ each take worst case $\Oo(h(n))$ time.

\paragraph{Analysis} Choosing any $b(n)\in\Oo(w/\log\lmax)$ we get height $h(n)\in\Oo(\frac{\log n}{\log b(n)})$, so Link and Cut take worst case $\Oo(\frac{b(n)\log n}{\log b(n)})$ time with this CoverLevel structure.  The remaining operations, Connected, Cover, Uncover, CoverLevel and MinCoveredEdge all take $\Oo(\frac{\log n}{\log b(n)})$ worst case time.  For the purpose of our main result, choosing $b(n)\in\Theta(\sqrt{\log n})$ is sufficient.  Each cluster uses $\Oo(1)$ space, so the total space used is $\Oo(n)$.

\section{Saving space}\label{sec:fat}
We now apply the space-saving trick from Lemma~\ref{lem:savespace} to the FindSize structures from Section~\ref{sec:findsize} and~\ref{sec:ApproxCount}.
Let $D$ be the number of words used for each size vector in our FindSize structure.  This is $\Oo(\log n)$ for the purely combinatorial version, and $\Oo(\log\log n)$ in the version using approximate counting.  As shown previously these use $s(n)=\Oo(D)$ space per cluster and $t(n)=\Oo(\log n\cdot D)$ worst case time per merge and split.

\begin{lemma}
  The complete information for a cluster of size $q$ in the FindSize structure, including information that would be shared with its children, has total size $s_0(q,n)=\Oo(q+\lmax\cdot D)$.
\end{lemma}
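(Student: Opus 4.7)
The plan is to enumerate the pieces of information stored for the cluster and bound each contribution separately. A cluster $C$ of size $q$ contains at most $q$ edges and at most $q+1$ incident vertices, so the underlying combinatorial description (edge list, boundary markers, and vertex labels of internal vertices) takes $\Oo(q)$ words; this is the part of the state that is naturally shared with the cluster's children when they exist.

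Next I would account for the explicit FindSize/CoverLevel fields. The cover-related data from Section~\ref{sec:coverlevel} ($\cover_C$, $\globalcover_C$, $\minpathedge_C$, $\minglobaledge_C$, $\cover^-_C$, $\cover^+_C$) uses only $\Oo(1)$ words. The $\size_C$ vector is a single $\lmax$-dimensional size vector, which by definition occupies $D$ words. The dominant contribution is the augmented balanced BST storing the $\partsize_{C,v,i}$ and $\diagsize_{C,v,i}$ vectors for each boundary vertex $v\in\boundary C$. Since the relevant index $i$ ranges over $-1\leq i\leq\lmax$, the tree has at most $\lmax+2$ nodes, and by inspection of the augmentation each node stores a constant number of $\lmax$-dimensional vectors ($\partsize$, $\diagsize$, $\partsizesum$, $\diagsizesum$), i.e.\ $\Oo(D)$ words. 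Summed over the tree and over the at most two boundary vertices, this costs $\Oo(\lmax\cdot D)$ words.

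Adding the contributions yields
\begin{align*}
  \Oo(q)\ +\ \Oo(1)\ +\ D\ +\ \Oo(\lmax\cdot D)\ =\ \Oo(q+\lmax\cdot D),
\end{align*}
as claimed. The main thing to be careful about is the scope of ``complete information'': I would argue explicitly that nothing else need be stored, because the $\pointsize_{C,v}$ quantities implicitly used in the definitions are recovered from the $\partsize$ entries, and the ``clean'' versions $\partsize'_{C,v,i}$, $\diagsize'_{C,v,i}$ are computed on demand from the cover-lazy fields at no extra storage. I expect the mild subtlety to be making precise which shared-with-children data is counted, so that the lemma plugs cleanly into Lemma~\ref{lem:savespace} with $q(n)=\Theta(\lmax\cdot D)$, giving $s_0(q(n),n)=\Oo(q(n))$ as required.
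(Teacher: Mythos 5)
Your proof is correct and follows essentially the same route as the paper's: enumerate the stored fields, charging $\Oo(q)$ for the per-edge data (the paper lists the cover levels $c(e)$ for all $e\in C$ here, which your ``combinatorial description'' should explicitly include since the recomputation algorithm needs them), $\Oo(D)$ for the constant number of cover fields plus $\size_C$, and $\Oo(\lmax\cdot D)$ for the $\partsize_{C,v,i}$ and $\diagsize_{C,v,i}$ vectors over $v\in\boundary C$ and $-1\leq i\leq\lmax$. No further comment needed.
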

\begin{proof}
  The complete information for a cluster $C$ with $\sizeof{C}=q$ consists of 
  \begin{itemize}
  \item $c(e)$ for all $e\in C$.
  \item $\cover_C$, $\cover^-_C$, $\cover^+_C$, $\globalcover_C$, $\size_C$.
  \item $\partsize_{C,v,i}$ and $\diagsize_{C,v,i}$ for $v\in\boundary C$ and $-1\leq i\leq\lmax$.
  \end{itemize}
  The total size for all of these is $s_0(q,n)=\Oo(q+\lmax\cdot D)$
\end{proof}
Note that when keeping $n$ fixed, this is clearly $\Oo(q)$.  In particular, we can choose $q(n)\in\Theta(\lmax\cdot D)$ such that $s(n)<s_0(q(n),n)\in\Oo(q(n))$.

\begin{lemma}
  The complete information for a cluster of size $q$ in the FindSize structure, including information that would be shared with its children, can be computed directly in time $t_0(q,n)=\Oo(q\log q+\lmax\cdot D)$.
\end{lemma}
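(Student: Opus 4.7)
I would compute the complete information for $C$ directly, without building any internal top tree. In an initial $O(q)$-time pass over $C$ I (i) trace the cluster path $\pi(C)$ by a DFS from one boundary vertex to the other (or set $\pi(C) = \set{v}$ if $\boundary C = \set{v}$), (ii) root every non-path vertex $w$ at its attachment vertex $p(w) \in \pi(C)$, and (iii) propagate the running minimum of edge cover levels so that, for every $w \in C$, I obtain
\[
  L_1(w) := \CoverLevel(w, p(w))
\]
(with $L_1(w) = \lmax$ when $w \in \pi(C)$). A walk along $\pi(C)$ starting at each boundary vertex $v \in \boundary C$ then computes
\[
  L_2(u,v) := \CoverLevel(u, v)
\]
for every $u \in \pi(C)$ in $O(|\pi(C)|) = O(q)$ extra time. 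The same passes read off $\cover_C$ and $\globalcover_C$, and I set $\cover^-_C = \cover^+_C = -1$ since $C$ is freshly computed.

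Unrolling the definitions of $\pointsize$, $\partpath$, and $\partsize$ yields the key identity
\[
  \partsize_{C,v,i,j}
  = \sizeof{\set{ w \in C \cond L_2(p(w),v) = i \text{ and } L_1(w) \ge j }}.
\]
For each $v \in \boundary C$ I therefore allocate an $(\lmax{+}2) \times \lmax$ bit-packed table $T_v$ whose $i$-th row occupies $O(D)$ words and in which $T_v[i][j]$ counts the vertices $w$ with $L_2(p(w),v) = i$ and $L_1(w) = j$. Initialization costs $O(\lmax D)$; the $q$ single-entry increments cost $O(q)$ in total. Then $\partsize_{C,v,i}$ is the $j$-suffix sum of the $i$-th row, $\diagsize_{C,v,i}$ is obtained from $\partsize_{C,v,i}$ by one $O(D)$-word masking with $M(i)$, and $\size_C$ comes analogously from a single $L_1$-only bucketing.

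The main obstacle is producing the $\lmax$ suffix sums within budget. Using standard word-parallel prefix-sum tricks on a bit-packed row of $\lmax$ counts (each of width $O(\log n)$ bits in the combinatorial version, or $O(\log\log n)$ bits under approximate counting), the suffix sum of a single row can be produced in $O(D \log \lmax)$ word operations, for a total of $O(\lmax D \log \lmax)$ over all rows and both boundary vertices. Since the space-saving trick chooses $q = \Theta(\lmax D)$, we have $\log \lmax \le \log q$ and hence $\lmax D \log \lmax \subseteq O(q \log q)$. Once the $O(\lmax)$ nonzero $\partsize_{C,v,i}$ and $\diagsize_{C,v,i}$ vectors are in hand, the augmented balanced BST keyed by $i$, together with its $\partsizesum$ and $\diagsizesum$ aggregates, is built bottom-up in $O(\lmax D)$ time. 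Summing across all steps yields $t_0(q,n) = \Oo(q\log q + \lmax \cdot D)$.
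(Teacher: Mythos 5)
Your proof is correct, but it takes a genuinely different route from the paper's. The paper first partitions $\pi(C)$ into the $\Oo(\lmax)$ parts and then, for each part, runs a Prim-style priority-first search over the hanging subtrees with a max-queue keyed by $c(e)$: vertices are visited in non-increasing order of their bottleneck cover level to the part, so the cumulative (already suffix-summed) $\partsize$ vector is built incrementally as the global threshold $j$ decreases, at cost $\Oo(q_i\log q_i+D)$ per part and with no intermediate table. You instead compute every bottleneck $L_1(w)=\CoverLevel(w,p(w))$ in a single $\Oo(q)$ DFS by propagating running minima (exploiting that the path from $w$ to its attachment point is unique in a tree), bucket into an $(\lmax{+}2)\times\lmax$ histogram, and recover the $\partsize$ vectors by per-row suffix sums. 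Your identity $\partsize_{C,v,i,j}=\sizeof{\set{w\cond L_2(p(w),v)=i\wedge L_1(w)\geq j}}$ is a correct unrolling of the definitions, and your traversal is more elementary and asymptotically cheaper ($\Oo(q)$ versus $\Oo(q\log q)$) than the paper's; the price is that the logarithmic factor migrates into the suffix-sum pass, costing $\Oo(\lmax\cdot D\cdot\log\lmax)$ with word-parallel prefix sums (or $\Oo(\lmax\cdot D)$ naively in the combinatorial version where each field is a full word). One caveat: because of that term, your argument establishes $t_0(q,n)=\Oo(q\log q+\lmax\cdot D)$ only under the additional assumption $q=\Theta(\lmax\cdot D)$, which you state explicitly and which is the only regime in which Section~\ref{sec:fat} invokes the lemma, whereas the paper's per-part sweep yields the stated bound for every $q$. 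Also note that in the approximate-counting version the packed suffix sums must be carried out on the floating-point representation, adding $\Oo(\log\lmax)$ levels to the depth of additions; this is absorbed by the error analysis of Section~\ref{sec:ApproxCount} but deserves a sentence.
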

\begin{proof}
  Let $C$ be the cluster of size $\sizeof{C}=q$. For each $v\in\boundary C$, we can in $\Oo(q)$ time find and partition the cluster path into the at most $\lmax$ parts such that in part $i$, each vertex $m$ on the cluster path have $\CoverLevel(v,m)=i$.
For each part $i$, run the following algorithm:
\begin{algorithmic}[1]
  \State Vector $x\gets \vec{0}$
  \State Initialize empty max-queue $Q$
  \State $j\gets\lmax$
  \For{$w\gets$ each vertex in the fragment that is on $\pi(C)$}
    \State Mark $w$ as visited
    \State $x_j\gets x_j+1$
    \For{$e\gets$ each edge incident to $w$ that is not on $\pi(C)$}
      \If{$c(e)\geq 0$}
        \State Add $e$ to $Q$ with key $c(e)$
      \EndIf
    \EndFor
  \EndFor
  \While{$Q$ is not empty}
    \State $e\gets\Call{extract-max}{Q}$
    \While{$c(e)<j$}
      \State $x_{j-1}=x_j$
      \State $j\gets j-1$
    \EndWhile
    \State $w\gets$ the unvisited vertex at the end of $e$
    \State Mark $w$ as visited
    \State $x_j\gets x_j+1$
    \For{$e\gets$ each edge incident to $w$ that has an unvisited end}
      \If{$c(e)\geq 0$}
        \State Add $e$ to $Q$ with key $c(e)$
      \EndIf
    \EndFor
  \EndWhile
  \State $\partsize_{C,v,i}\gets x$
  \State $\diagsize_{C,v,i}\gets M\!(i)\cdot x$
\end{algorithmic}
If the $i$th part has size $q_i$ than it can be processed this way in $\Oo(q_i\log q_i+D)$ time.  Summing over all $\Oo(\lmax)$ parts gives the desired result.
\end{proof}

\paragraph{Analysis}
Applying Lemma~\ref{lem:savespace} with the $s(n)$, $t(n)$, $s_0(q,n)$, $t_0(q,n)$ and $q(n)$ derived in this section immediately gives a FindSize structure with $\Oo(\log n\cdot D \cdot \log\lmax)$ worst case time per operation and using $\Oo(n)$ space.  A completely analogous argument shows that we can convert the bitpacking-free version of the FindFirstLabel structure from $\Oo(\log n\cdot\lmax\cdot\log\lmax)$ time and $\Oo(m+n\cdot\lmax)$ space to one using linear space. (If bitpacking is allowed the structure already used linear space). In either case is the same time per operation as the original versions, so using the modified version here does not affect the overall running time, but reduces the total space of each bridge-finding structure to $\Oo(m+n)$.

Note that we can explicitly store lists with all the least-covered edges for these large base clusters, so this does not change the time to report the first $k$ least-covered edges.

\bibliographystyle{plain}
\bibliography{paper}

\appendix

\section{Details of the high level algorithm}\label{app:highlevel}

\lemhighlevel*
\begin{proof}
  The only part of the high level algorithm from~\cite{Holm:2001} that
  does not directly and trivially translate into a call of the
  required dynamic tree operations (see pseudocode below) is in the Swap method where given a
  tree edge $e=(v,w)$ we need to find a nontree edge $e'$ covering $e$
  with $\ell(e')=i=\CoverLevel(e)$.  We can find this
  $e'$ by using FindFirstLabel and increasing the level of each
  non-tree edge we examine that does not cover $e$.  For at least one
  side of $(v,w)$, all non-tree edges at level $i$ incident to that
  side will either cover $e$ or can safely have their level increased
  without violating the size invariant.  So we can simply search the side where the level $i$ component is smallest until we find the required edge (which must exist since $e$ was covered on level $i$). 
  The amortized cost of all operations remain
  unchanged with this implementation. Counting the number of operations (see Table~\ref{tbl:opcounts}) gives the desired bound.
\end{proof}

\begin{table}[htb]
    \center
    \begin{adjustbox}{
            max width={\dimexpr(\paperwidth+2\textwidth)/3\relax},
            Trim={\dimexpr(\paperwidth-\textwidth)/6\relax} {0pt}
        }
        \small
        \begin{tabular}{|r|l||c|c|c|c|c|}
            \hline
            \multirow{2}{*}{\#}&\multirow{2}{*}{Operation} & \multicolumn{5}{c|}{\#Calls during} \\
            && Insert+Delete & FindBridge$(v)$ & FindBridge$(v,w)$ & Size(v) &  $2$-Size(v) \\
            \hline
            \ref{it:link}&Link$(v,w,e)$ & $1$ & $0$ & $0$ & $0$ & $0$ \\
            \ref{it:cut}&Cut$(e)$ & $1$ & $0$ & $0$ & $0$ & $0$ \\
            \hline
            \ref{it:conn}&Connected$(v,w)$ & $\log n$ & $0$ & $1$ & $0$ & $0$ \\
            \ref{it:cover}&Cover$(v,w,i)$ & $\log n$ & $0$ & $0$ & $0$ & $0$ \\
            \ref{it:uncover}&Uncover$(v,w,i)$ & $1$ & $0$ & $0$ & $0$ & $0$ \\
            \ref{it:cl1}&CoverLevel$(v)$ & $0$ & $1$ & $0$ & $0$ & $0$ \\
            \ref{it:cl2}&CoverLevel$(v,w)$ & $1$ & $0$ & $1$ & $0$ & $0$ \\
            \ref{it:mce1}&MinCoveredEdge$(v)$ & $0$ & $1$ & $0$ & $0$ & $0$ \\
            \ref{it:mce2}&MinCoveredEdge$(v,w)$ & $0$ & $0$ & $1$ & $0$ & $0$ \\
            \hline
            \ref{it:al}&AddLabel$(v,l,i)$ & $\log n$ & $0$ & $0$ & $0$ & $0$ \\
            \ref{it:rl}&RemoveLabel$(l)$ & $\log n$ & $0$ & $0$ & $0$ & $0$ \\
            \ref{it:ffl}&FindFirstLabel$(v,w,i)$ & $\log n$ & $0$ & $0$ & $0$ & $0$ \\
            \hline
            \multirow{2}{*}{\ref{it:fs}}
            &FindSize$(v,w,i)$ & $\log n$ & $0$ & $0$ & $0$ & $1$ \\
            &FindSize$(v,v,-1)$ & $0$ & $0$ & $0$ & $1$ & $0$ \\
            \hline
        \end{tabular}
    \end{adjustbox}
    \caption{\label{tbl:opcounts}Overview of how many times each tree operation is called for each graph operation, ignoring constant factors. The ``Insert+Delete'' column is amortized over any sequence starting with an empty set of edges.  The remaining columns are worst case.
    }
\end{table}

\begin{algorithmic}[1]
  \Function{$2$-edge-connected}{$v$, $w$}
  \State \Return \Call{T.Connected}{$v$, $w$} $\wedge$ \Call{T.CoverLevel}{$v$, $w$}$\geq 0$
  \EndFunction

  \Function{FindBridge}{$v$}
  \If{\Call{T.CoverLevel}{$v$}$=-1$}
    \State \Return \Call{T.MinCoveredEdge}{$v$}
  \Else
    \State \Return $\nil$
  \EndIf
  \EndFunction

  \Function{FindBridge}{$v$, $w$}
  \If{\Call{T.CoverLevel}{$v$, $w$}$=-1$}
    \State \Return \Call{T.MinCoveredEdge}{$v$, $w$}
  \Else
    \State \Return $\nil$
  \EndIf
  \EndFunction

  \Function{Size}{$v$}
  \State \Return \Call{T.FindSize}{$v$,$v$,$-1$}
  \EndFunction

  \Function{$2$-Size}{$v$}
  \State \Return \Call{T.FindSize}{$v$,$v$,$0$}
  \EndFunction

  \Function{Insert}{$v$, $w$, $e$}
  \If{$\neg$\Call{T.Connected}{$v$, $w$}}
    \State \Call{T.Link}{$v$, $w$, $e$}
    \State $\ell(e)\gets \lmax$
  \Else
    \State \Call{T.AddLabel}{$v$, $e.$label1, $0$}
    \State \Call{T.AddLabel}{$w$, $e.$label2, $0$}
    \State $\ell(e)\gets 0$
    \State \Call{T.Cover}{$v$, $w$, $0$}
  \EndIf
  \EndFunction

  \Function{Delete}{$e$}
  \State $(v,w)\gets e$
    \State $\alpha\gets\ell(e)$
    \If{$\alpha=\lmax$}
      \State $\alpha\gets\Call{T.CoverLevel}{v,w}$
      \If{$\alpha=-1$}
        \State $\Call{T.Cut}{e}$
        \State \Return
      \EndIf
      \State \Call{Swap}{$e$}
    \EndIf
    \State \Call{T.RemoveLabel}{$e.$label1}
    \State \Call{T.RemoveLabel}{$e.$label2}
    \State $\Call{T.Uncover}{v,w,\alpha}$
    \For{$i\gets \alpha,\ldots,0$}
      \State \Call{Recover}{$w$,$v$,$i$}
    \EndFor
  \EndFunction

  \Function{Swap}{$e$}
  \State $(v,w)\gets e$
  \State $\alpha\gets\Call{T.CoverLevel}{v,w}$
  \State $\Call{T.Cut}{e}$
  \State $e'\gets$\Call{FindReplacement}{$v$,$w$,$\alpha$}
  \State $(x,y)\gets e'$
  \State \Call{T.RemoveLabel}{$e'.$label1}
  \State \Call{T.RemoveLabel}{$e'.$label2}
  \State \Call{T.Link}{$x,y,e'$}
  \State $\ell(e')\gets \lmax$
  \State \Call{T.AddLabel}{$v$,$e.$label1, $\alpha$}
  \State \Call{T.AddLabel}{$w$,$e.$label2, $\alpha$}
  \State $\ell(e)\gets \alpha$
  \State \Call{T.Cover}{$v$,$w$,$\alpha$}
  \EndFunction

  \Function{FindReplacement}{$v$,$w$,$i$}
    \State $s_v\gets\Call{T.FindSize}{v,v,i}$
    \State $s_w\gets\Call{T.FindSize}{w,w,i}$
    \If{$s_v\leq s_w$}
      \State \Return $\Call{RecoverPhase}{v,v,i,s_v}$
    \Else
      \State \Return $\Call{RecoverPhase}{w,w,i,s_w}$
    \EndIf
  \EndFunction

  \Function{Recover}{$v$,$w$,$i$}
    \State $s\gets\floor{ \Call{T.FindSize}{v, w, i} / 2}$
    \State \Call{RecoverPhase}{$v$,$w$,$i$,$s$}
    \State \Call{RecoverPhase}{$w$,$v$,$i$,$s$}
  \EndFunction

  \Function{RecoverPhase}{$v,w,i,s$}
  \State $l\gets\Call{T.FindFirstLabel}{v,w,i}$
  \While{$l\neq\nil$}
    \State $e\gets l.$edge
    \State $(q,r)\gets e$
    \If{$\neg$\Call{T.Connected}{$q$, $r$}}
      \State \Return $e$
    \EndIf
    \If{$\Call{T.FindSize}{q,r,i+1}\leq s$}
        \State \Call{T.RemoveLabel}{$e.$label1}
        \State \Call{T.RemoveLabel}{$e.$label2}
        \State \Call{T.AddLabel}{$q$, $e.$label1, $i+1$}
        \State \Call{T.AddLabel}{$r$, $e.$label2, $i+1$}
        \State $\ell(e)=i+1$
        \State \Call{T.Cover}{$q$,$r$,$i+1$}
      \Else
        \State \Call{T.Cover}{$q$,$r$,$i$}
        \State \Return $\nil$
    \EndIf
    \State $l\gets\Call{T.FindFirstLabel}{v,w,i}$
  \EndWhile
  \State \Return $\nil$
  \EndFunction
\end{algorithmic}

\section{Pseudocode for the $\CoverLevel$ structure}\label{app:coverlevel}
\begin{algorithmic}[1]
  \Function{CL.Cover}{$v$,$w$,$i$}
  \State $C\gets$ \Call{TopTree.Expose}{$v$, $w$}
  \State $\cover_C\gets\max\set{\cover_C,i}$
  \State $\cover^+_C\gets\max\set{\cover^+_C,i}$
  \EndFunction

  \Function{CL.Uncover}{$v$,$w$,$i$}
  \State $C\gets$ \Call{TopTree.Expose}{$v$, $w$}
  \State $\cover_C\gets-1$
  \State $\cover^+_C\gets-1$
  \State $\cover^-_C\gets\max\set{\cover^-_C,i}$
  \EndFunction

  \Function{CL.CoverLevel}{$v$}
  \State $C\gets$ \Call{TopTree.Expose}{$v$}
  \State \Return $\globalcover_C$
  \EndFunction

  \Function{CL.CoverLevel}{$v$, $w$}
  \State $C\gets$ \Call{TopTree.Expose}{$v$, $w$}
  \State \Return $\cover_C$
  \EndFunction

  \Function{CL.MinCoveredEdge}{$v$}
  \State $C\gets$ \Call{TopTree.Expose}{$v$}
  \State \Return $\minglobaledge_C$
  \EndFunction

  \Function{CL.MinCoveredEdge}{$v$, $w$}
  \State $C\gets$ \Call{TopTree.Expose}{$v$, $w$}
  \State \Return $\minpathedge_C$
  \EndFunction

  \Function{CL.Split}{$C$}
  \For{each path child $D$ of $C$}
    \If{$\max\set{\cover_D,\cover^-_D}\leq\cover^-_C$}
      \State $\cover^-_D \gets \cover^-_C$
    \EndIf
    \If{$\cover_D\leq\max\set{\cover^-_D,\cover^+_D}$}
      \State $\cover_D \gets \cover^+_C$
      \State $\cover^+_D \gets \cover^+_C$
    \EndIf
  \EndFor
  \EndFunction

  \Function{CL.Merge}{$C$; $A_1,\ldots,A_k$}
  \State $\cover_C \gets \lmax$
  \State $\minpathedge_C \gets \nil$
  \State $\globalcover_C \gets \lmax$
  \State $\minglobaledge_C \gets \nil$
  \For{$i\gets 1,\ldots,k$}
    \If{$\boundary A_i\subseteq\pi(C)$}
      \If{$\cover_{A_i}<\cover_C$}
        \State $\cover_C \gets \cover_{A_i}$
        \State $\minpathedge_C \gets \minpathedge_{A_i}$
      \EndIf
    \Else
      \If{$\cover_{A_i}<\globalcover_C$}
        \State $\globalcover_C \gets \cover_{A_i}$
        \State $\minglobaledge_C \gets \minpathedge_{A_i}$
      \EndIf
    \EndIf
    \If{$\globalcover_{A_i}<\globalcover_C$}
      \State $\globalcover_C \gets \globalcover_{A_i}$
      \State $\minglobaledge_C \gets \minglobaledge_{A_i}$
    \EndIf
  \EndFor
  \State $\cover^-_C \gets -1$
  \State $\cover^+_C \gets -1$
  \EndFunction

  \Function{CL.Create}{$C$; edge $e$}
  \State $\cover_C \gets -1$
  \State $\globalcover_C \gets -1$
  \If{$C$ is a point cluster}
    \State $\minpathedge_C \gets \nil$
    \State $\minglobaledge_C \gets e$
  \Else
    \State $\minpathedge_C \gets e$
    \State $\minglobaledge_C \gets \nil$
  \EndIf
  \State $\cover^-_C \gets -1$
  \State $\cover^+_C \gets -1$
  \EndFunction
\end{algorithmic}

\section{Pseudocode for the $\FindSize$ structure}\label{app:findsize}
In the following, we use the notation
\begin{align*}
  [\operatorname{key}: \partsize,\diagsize]
\end{align*}
to denote the root of a new tree consisting of a single node with the given values.
And for a given tree root and given $x,y$
\begin{align*}
  (\tree_{\set{x\leq i\leq y}})
\end{align*}
is the root of the subtree consisting of all nodes whose keys are in the given range.
Similarly, for any given $i$, let
\begin{align*}
  (\tree_{i})
\end{align*}
denote the node in the tree having the given key.

\begin{algorithmic}[1]
  \Function{FS.FindSize}{$v$, $w$, $i$}
  \State $C\gets$ \Call{TopTree.Expose}{$v$, $w$}
  \State \Return $\size_{C,i}$
  \EndFunction

  \Function{FS.Merge}{$C$; $A$, $B$}
  \State $\set{c}\gets\boundary A\cap\boundary B$
  \If{$c\in\pi(C)$}
    \Comment{Merge along path}
    \If{$\abs{\boundary C}<=1$}
      \State $a\gets c$, $b\gets c$
    \Else
      \State $\set{a,b}\gets\boundary C$ with $a\in\boundary A$ and $b\in\boundary B$.
    \EndIf
    \State $\size_C\gets\size_A+\size_B$
    \For{$(x,X)\gets (a,A),(b,B)$}
      \If{$x=c$}
        \State $\tree'_{X,x}\gets\tree_{X,x}$, $\undo'_{X,x}\gets\nil$
      \Else
        \For{$v\gets x, c$}
          \State $\ell\gets\max\set{\cover^-_X,\cover^+_X}$
          \State $s\gets (\tree_{X,v}).\partsizesum$
          \State $d\gets M\!(\cover^+_X)*s$
          \State $\tree'_{X,v}\gets\tree_{X,v,\set{i>\ell}}$, $\undo'_{X,v}\gets\tree_{X,v,\set{i\leq\ell}}$
          \State $\tree'_{X,v}\gets\tree'_{X,v}+[\cover^+_X:s,d]$
        \EndFor
      \EndIf
    \EndFor
    \For{$(x,X,y,Y)\gets (a,A,b,B),(b,B,a,A)$}
      \State $s\gets (\tree'_{Y,c,\set{\cover_X\leq i\leq\lmax}}).\partsizesum$
      \State $p\gets(\tree'_{X,x,\cover_X}).\partsize+s$
      \State $d\gets(\tree'_{X,x,\cover_X}).\diagsize+M\!(\cover_X)*s$

      \If{$x=c$}
        \State $\tree''_{X,x}\gets [\lmax:\size_X,\size_X]$,
               $\undo''_{X,x}\gets \nil$
      \Else
        \State $\tree''_{X,x}\gets\tree'_{X,x,\set{i>\cover_X}}$,
               $\undo''_{X,x}\gets\tree'_{X,x,\set{i\leq\cover_X}}$
      \EndIf
      \If{$y=c$}
        \State $\tree'''_{Y,c}\gets\nil$,
               $\undo'''_{Y,c}\gets [\lmax:\size_Y,\size_Y]$
      \Else
        \State $\tree'''_{Y,c}\gets\tree'_{Y,c,\set{i<\cover_X}}$,
               $\undo'''_{Y,c}\gets\tree'_{Y,c,\set{i\geq\cover_X}}$
      \EndIf

      \State  $\tree_{C,x}\gets\tree''_{X,x}+[\cover_X:p,d]+\tree'''_{Y,c}$
    \EndFor
  \Else
    \Comment{Merge off path}
    \State $\set{a}\gets\boundary C\setminus\set{c}$
    \If{$a\not\in\boundary A$}
      \State Swap $A$ and $B$
    \EndIf
    \State $\ell\gets\max\set{\cover^-_A,\cover^+_A}$
    \State $d \gets (\tree_{A,a,\set{\ell<i\leq\lmax}}).\diagsizesum$
    \State $p \gets (\tree_{A,a,\set{-1\leq i\leq\ell}}).\partsizesum$
    \State $\size_C\gets d + M\!(\cover^+_A)*p + M\!(\cover_A)*\size_B$
    \State $\tree_{C,a} \gets [\lmax:\size_C,\size_C]$
  \EndIf
  \EndFunction

  \Function{FS.Split}{$C$}
  \State $A,B\gets$ the children of $C$
  \State $\set{c}\gets\boundary A\cap\boundary B$
  \If{$c\in\pi(C)$}
    \Comment{Split along path}
    \If{$\abs{\boundary C}<=1$}
      \State $a\gets c$, $b\gets c$
    \Else
      \State $\set{a,b}\gets\boundary C$ with $a\in\boundary A$ and $b\in\boundary B$.
    \EndIf
    \For{$(x,X,y,Y)\gets (a,A,b,B),(b,B,a,A)$}
      \State $\tree''_{X,x}\gets \tree_{C,x,\set{i>\cover_X}}$,
             $\tree'''_{Y,c}\gets \tree_{C,x,\set{i<\cover_X}}$
      \If{$y\neq c$}
        \State $\tree'_{Y,c}\gets\tree'''_{Y,c}+\undo'''_{Y,c}$
      \EndIf
      \If{$x\neq c$}
        \State $\tree'_{X,x}\gets\tree''_{X,x}+\undo''_{X,x}$
      \EndIf
    \EndFor
    \For{$(x,X)\gets (a,A),(b,B)$}
      \If{$x\neq c$}
        \For{$v\gets x,c$}
          \State $\tree_{X,v}\gets\tree'_{X,v,\set{i>\cover^+_X}}+\undo'_{X,v}$
        \EndFor
      \EndIf
    \EndFor
  \EndIf
  \EndFunction

  \Function{FS.Create}{$C$; edge $e$}
  \State $\size_C\gets \vec{0}$
  \For{$v\in\boundary C$}
    \State $\tree_{C,v}\gets [\lmax: \vec{0},\vec{0}]$
  \EndFor
  \EndFunction

  \Function{FS.Create}{$C$; vertex label $l$}
  \State $\size_C\gets (1)_{\set{0\leq i<\lmax}}$
  \For{$v\in\boundary C$}
    \State $\tree_{C,v} = [\lmax:\size_C,\size_C]$
  \EndFor
  \EndFunction
\end{algorithmic}

\end{document}